\newcommand{\bG}{\boldsymbol\Gamma}
\newcommand{\bg}{\boldsymbol\gamma}
\newcommand{\bD}{\boldsymbol\Delta}
\newcommand{\bZ}{\mathbf Z}
\newcommand{\bX}{\mathbf X}
\newcommand{\bm}{\boldsymbol m}
\newcommand{\bu}{\boldsymbol u}
\newtheorem{prop}{Proposition}
\newtheorem{theo}{Theorem}
\newtheorem{coro}{Corollary}[theo]
\newcommand{\blind}{0}
\begin{document}

\def\spacingset#1{\renewcommand{\baselinestretch}%
{#1}\small\normalsize} \spacingset{1}

\spacingset{1.1}

\if0\blind
{
  \title{\bf Bayesian Estimation of Bipartite Matchings for Record Linkage}
  \author{Mauricio Sadinle\thanks{
Mauricio Sadinle is a Postdoctoral Associate within the Department of Statistical Science, Duke University, Durham, NC 27708 and the National Institute of Statistical Science --- NISS (e-mail: msadinle@stat.duke.edu).  This research is derived from the PhD thesis of the author and was supported by NSF grants SES-11-30706 to Carnegie Mellon University and SES-11-31897 to Duke University/NISS.  The author thanks Kira Bokalders, Bill Eddy, Steve Fienberg, Rebecca Nugent, Jerry Reiter, Beka Steorts, Andrea Tancredi, Bill Winkler, the editors, associate editor and referees for helpful comments and suggestions on earlier versions of this paper, Patrick Ball and Megan Price from the Human Rights Data Analysis Group --- HRDAG for providing access to the data used in this article, and Peter Christen for sharing his synthetic datafile generator.}\hspace{.2cm}\\
    Department of Statistical Science, Duke University, and \\National Institute of Statistical Sciences\\
		}
  \maketitle
} \fi

\if1\blind
{
  \bigskip
  \bigskip
  \bigskip
\title{\bf Bayesian Estimation of Bipartite Matchings for Record Linkage}
\maketitle
  \medskip
} \fi

\begin{abstract}
The bipartite record linkage task consists of merging two disparate datafiles containing information on two overlapping sets of entities.  This is non-trivial in the absence of unique identifiers and it is important for a wide variety of applications given that it needs to be solved whenever we have to combine information from different sources.  Most statistical techniques currently used for record linkage are derived from a seminal paper by Fellegi and Sunter (1969).  These techniques usually assume independence in the matching statuses of record pairs to derive estimation procedures and optimal point estimators.  We argue that this independence assumption is unreasonable and instead target a bipartite matching between the two datafiles as our parameter of interest.  Bayesian implementations allow us to quantify uncertainty on the matching decisions and derive a variety of point estimators using different loss functions.  We propose partial Bayes estimates that allow uncertain parts of the bipartite matching to be left unresolved.  We evaluate our approach to record linkage using a variety of challenging scenarios and show that it outperforms the traditional methodology.  We illustrate the advantages of our methods merging two datafiles on casualties from the civil war of El Salvador.
\end{abstract}
\noindent%
{\it Keywords:} Assignment problem; Bayes estimate; Data matching; Fellegi-Sunter decision rule; Mixture model; Rejection option.

\spacingset{1.2} 

\newpage
\section{Introduction}

Joining data sources requires identifying which entities are simultaneously represented in more than one source.  Although this is a trivial process when unique identifiers of the entities are recorded in the datafiles, in general it has to be solved using the information that the sources have in common on the entities.  Most of the statistical techniques currently used to solve this task are derived from a seminal paper by \cite{FellegiSunter69} who formalized procedures that had been proposed earlier \citep[see][and references therein]{Newcombeetal59,NewcombeKennedy62}.  A number of important record linkage projects have been developed under some variation of the Fellegi-Sunter approach, including the merging of the 1990 U.S. Decennial Census and Post-Enumeration Survey to produce adjusted Census counts \citep{WinklerThibaudeau91}, the Generalized Record Linkage System at Statistics Canada \citep{Fair04}, the Person Identification Validation System at the U.S. Census Bureau \citep{WagnerLayne14}, and the LinkPlus software at the U.S. \cite{CDCLinkPlus}, among many others \citep[e.g.,][]{GillGoldacre03, Singleton13}. 

In this article we are concerned with \emph{bipartite record linkage}, where we seek to merge two datafiles while assuming that each entity is recorded maximum once in each file.  Most of the statistical literature on record linkage deal with this scenario \citep{FellegiSunter69,Jaro89,Winkler88,Winkler93,Winkler94,BelinRubin95,LarsenRubin01,HerzogScheurenWinkler07,TancrediLiseo11,Gutmanetal13}.  Despite the popularity of the Fellegi-Sunter approach and its variants to solve this task, it is also recognized to have a number of caveats \citep[e.g.,][]{Winkler02}.  In particular, the no-duplicates within-file assumption implies a maximum one-to-one restriction in the linkage, that is, a record from one file can be linked with maximum one record from the other file.  Modern implementations of the Fellegi-Sunter methodology that use mixture models ignore this restriction \citep{Winkler88,Jaro89,BelinRubin95,LarsenRubin01}, leading to the necessity of enforcing the maximum one-to-one assignment in a post-processing step \citep{Jaro89}.  Furthermore, this restriction is also ignored by the decision rule proposed by \cite{FellegiSunter69} to classify pairs of records into \emph{links}, \emph{non-links}, and \emph{possible links}, and therefore the conditions for its theoretical optimality are not met in practice.  

Despite the weaknesses of the Fellegi-Sunter approach, it has a number of advantages on which we build in this article, in addition to pushing forward existing Bayesian improvements.  After clearly defining a \emph{bipartite matching} as the parameter of interest in bipartite record linkage (Section \ref{s:general_task}), in Section \ref{s:FS} we review the traditional Fellegi-Sunter methodology, its variants and modern implementations using mixture models, and we provide further details on its caveats.  In Section \ref{s:BBRL} we improve on existing Bayesian record linkage ideas, in particular we extend the modeling approaches of \cite{Fortinietal01} and \cite{Larsen02, Larsen05, Larsen10, Larsen12} to properly deal with missing values and capture partial agreements when comparing pairs of records.  Most importantly, in Section \ref{ss:BRLpointest} we derive Bayes estimates of the bipartite matching according to a general class of loss functions.  Given that Bayesian approaches allow us to properly quantify uncertainty in the matching decisions we include a \emph{rejection option} in our loss functions with the goal of leaving uncertain parts of the bipartite matching undeclared.  The resulting Bayes estimates provide an alternative to the Fellegi-Sunter decision rule.  In Section \ref{s:simulations} we compare our Bayesian approach with the traditional Fellegi-Sunter methodology under a variety of linkage scenarios.  In Section \ref{s:SV} we consider the problem of joining two data sources on civilian casualties from the civil war of El Salvador, and we explain the advantages of using our estimation procedures in that context.  

\section{The Bipartite Record Linkage Task}\label{s:general_task}

Consider two datafiles $\bX_1$ and $\bX_2$ that record information from two overlapping sets of individuals or entities.  These datafiles contain $n_1$ and $n_2$ records, respectively, and without loss of generality we assume $n_1\geq n_2$.  These files originate from two record-generating processes that may induce errors and missing values.  We assume that each individual or entity is recorded maximum once in each datafile, that is, the datafiles contain no duplicates.  Under this set-up the goal of record linkage can be thought of as identifying which records in files $\bX_1$ and $\bX_2$ refer to the same entities.  We denote the number of entities simultaneously recorded in both files by $n_{12}$, and so $0\leq n_{12}\leq n_2$.  Formally, our parameter of interest can be represented by a \emph{bipartite matching} between the two sets of records coming from the two files, as we now explain.

\subsection{A Bipartite Matching as the Parameter of Interest}

We briefly review some basic terminology from graph theory \cite[see, e.g.,][]{LovaszPlummer86}.  A graph $G=(V,E)$ consists of a finite number of elements $V$ called \emph{nodes} and a set of pairs of nodes $E$ called \emph{edges}.  A graph whose node set $V$ can be partitioned into two disjoint non-empty subsets $A$ and $B$ is called \emph{bipartite} if each of its edges connects a node of $A$ with a node of $B$.  A set of edges in a graph $G$ is called a \emph{matching} if all of them are pairwise disjoint.  A matching in a bipartite graph is naturally called a \emph{bipartite matching} (see the example in Figure \ref{f:ToyBipartMatch}).  

\begin{figure*}[t]
\centering
\begin{tikzpicture}[thick,
  every node/.style={draw,circle},
  fsnode/.style={fill=black},
  ssnode/.style={fill=white},
  every fit/.style={rectangle,draw,inner sep=10pt,text width=1.2cm},
]

\begin{scope}[start chain=going below,node distance=3mm]
\foreach \i in {1,2,...,5}
  \node[fsnode,on chain] (f\i) [label=left: \i] {};
\end{scope}

\begin{scope}[xshift=2.5cm,yshift=0cm,start chain=going below,node distance=3mm]
\foreach \i in {1,2,...,4}
  \node[ssnode,on chain] (s\i) [label=right: \i] {};
\end{scope}

\node [black,fit=(f1) (f5),label=above:$A$] {};
\node [black,fit=(s1) (s4),label=above:$B$] {};

\draw (f1) -- (s2);
\draw (s1) -- (f4);
\draw (f2) -- (s4);
\end{tikzpicture}
  \begin{minipage}[b]{1\textwidth}
  \caption{Example of bipartite matching represented by the edges in this graph.}
\label{f:ToyBipartMatch}\end{minipage} 
\end{figure*}
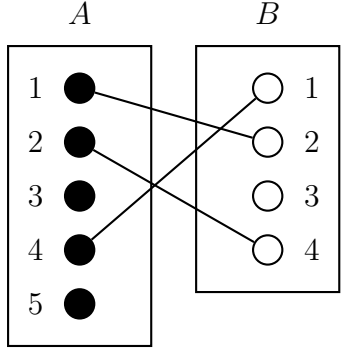

In the bipartite record linkage context we can think of the records from files $\bX_1$ and $\bX_2$ as two disjoint sets of nodes, where an edge between two records represents them referring to the same entity, which we also call being \emph{coreferent} or being a \emph{match}.  The assumption of no duplicates within datafile implies that edges between records of the same file are not possible.  Furthermore, given that the relation of coreference between records is transitive, the graph has to represent a bipartite matching, because if two edges had an overlap, say $(i,j)$ and $(i,j')$, $i\in \bX_1,\ j, j'\in\bX_2$, by transitivity we would have that $j$ and $j'$ would be coreferent, which contradicts the assumption of no within-file duplicates.  

A bipartite matching can be represented in different ways.  The matrix representation consists of creating a \emph{matching matrix} $\bD$ of size $n_1\times n_2$ whose $(i,j)$th entry is defined as
\begin{equation*}
\Delta_{ij}=
\left\{
  \begin{array}{ll}
    1, & \hbox{ if records $i\in \bX_1$ and $j \in \bX_2$ refer to the same entity;} \\
    0, & \hbox{ otherwise.}
  \end{array}
\right.
\end{equation*}
The characteristics of a bipartite matching imply that each column and each row of $\bD$ contain maximum one entry being equal to one.  This representation has been used by a number of authors \citep{LiseoTancredi11, TancrediLiseo11, Fortinietal01, Fortinietal02, Larsen02, Larsen05,Larsen10, Larsen12, Gutmanetal13} but it is not very compact. We propose an alternative way of representing a bipartite matching by introducing a \emph{matching labeling} $\bZ=(Z_{1},Z_{2},\dots,Z_{n_2})$ for the records in the file $\bX_2$ such that 
\begin{equation*}
Z_{j}=
\left\{
  \begin{array}{ll}
    i, & \hbox{ if records $i\in \bX_1$ and $j \in \bX_2$ refer to the same entity;} \\
    n_1+j, & \hbox{ if record $j \in \bX_2$ does not have a match in file $\bX_1$. }
  \end{array}
\right.
\end{equation*}
Naturally we can go from one representation to the other using the relationship $\Delta_{ij}=I(Z_{j}=i)$, where $I(\cdot)$ is the indicator function.  We shall use either representation throughout the document depending on which one is more convenient, although matching labelings are better suited for computations.

\subsection{Approaches to Bipartite Record Linkage}

The goal of bipartite record linkage is to estimate the bipartite matching between two datafiles using the information contained in them.  There are a number of different approaches to do this depending on the specific characteristics of the problem and what information is available.  

A number of approaches directly model the information contained in the datafiles \citep{Fortinietal02, Matsakis10, LiseoTancredi11, TancrediLiseo11,  Gutmanetal13, Steortsetal13}, which requires crafting specific models for each type of field in the datafile, and are therefore currently limited to handle nominal categorical fields, or continuous variables modeled under normality.  In practice, however, fields that are complicated to model, such as names, addresses, phone numbers, or dates, are important to merge datafiles. 

A more common way of tackling this problem is to see it as a traditional classification problem: we need to classify record pairs into matches and non-matches.  If we have access to a sample of record pairs for which the true matching statuses are known, we can train a classifier on this sample using comparisons between the pairs of records as our predictors, and then predict the matching statuses of the remaining record pairs \citep[e.g.,][]{Cochinwalaetal01,Bilenkoetal03,Christen08,Sametal13}.  Nevertheless, classification methods typically assume that we are dealing with i.i.d. data, and therefore the training of the models and the prediction using them heavily rely on this assumption.  In fact, given that these methods output independent matching decisions for pairs of records, they lead to conflicting decisions since they violate the maximum one-to-one assignment constraint of bipartite record linkage.  Typically some subsequent post-processing step is required to solve these inconsistencies.  

Finally, perhaps the most popular approach to record linkage is what we shall call the \emph{Fellegi-Sunter approach}, although many authors have contributed to it over the years.  Despite its difficulties, this approach does not require training data and it can handle any type of field, as long as records can be compared in a meaningful way.  Given that training samples are too expensive to create and datafiles often contain information that is too complicated to model, we believe that the Fellegi-Sunter approach tackles the most common scenarios where  record linkage is needed.  We therefore review this approach in more detail in the next section, and in the remainder of the article we shall refrain from referring to the direct modeling and supervised classification approaches.

\section{The Fellegi-Sunter Approach to Record Linkage}\label{s:FS}

Following \cite{FellegiSunter69}, we can think of the set of ordered record pairs $\bX_1 \times \bX_2$ as the union of the set of \textit{matches} $\mathbf M=\{(i,j); i\in\bX_1,j\in\bX_2, \Delta_{ij}=1\}$ and the set of \textit{non-matches} $\mathbf U=\{(i,j); i\in\bX_1,j\in\bX_2, \Delta_{ij}=0\}$.  The goal when linking two files can be seen as identifying the sets $\mathbf M$ and $\mathbf U$.  When record pairs are estimated to be matches they are called \emph{links} and when estimated to be non-matches they are called \emph{non-links}.  The Fellegi-Sunter approach uses pairwise comparisons of the records to estimate their matching statuses.

\subsection{Comparison Data}

In most record linkage applications two records that refer to the same entity should be very similar, otherwise the amount of error in the datafiles may be too large for the record linkage task to be feasible.  On the other hand, two records that refer to different entities should generally be very different.  Comparison vectors $\bg_{ij}$ are obtained for each record pair $(i,j)$ in $\bX_1 \times \bX_2$ with the goal of finding evidence of whether they represent matches or not.  These vectors can be written as $\bg_{ij}=(\gamma_{ij}^1,\dots,\gamma_{ij}^f,\dots,\gamma_{ij}^F)$, where $F$ denotes the number of criteria used to compare the records.  Traditionally these $F$ criteria correspond to one comparison per each field that the datafiles have in common.  

The appropriate comparison criteria depend on the information contained by the records.  The simplest way to compare the same field of two records is to check whether they agree or not.  This strategy is commonly used to compare unstructured nominal information such as gender or race, but it ignores partial agreements when used with strings or numeric measurements.  To take into account partial agreement among string fields (e.g., names) \cite{Winkler90Strings} proposed to use string metrics, such as the normalized Levenshtein edit distance or any other \citep[see][]{Bilenkoetal03,Elmagarmidetal07}, and divide the resulting set of similarity values into different \emph{levels of disagreement}.  Winkler's approach can be extended to compute levels of disagreement for fields that are not appropriately compared in a dichotomous fashion.

Let $\mathcal{S}_f(i,j)$ denote a similarity measure computed from field $f$ of records $i$ and $j$.  The range of $\mathcal{S}_f$ can be divided into $L_f+1$ intervals $I_{f0}, I_{f1},\dots, I_{fL_f}$, that represent different disagreement levels.  In this construction the interval $I_{f0}$ represents the highest level of agreement, which includes total agreement, and the last interval $I_{fL_f}$ represents the highest level of disagreement, which depending on the field represents complete or strong disagreement.  This allows us to construct the comparison vectors from the ordinal variables:
\begin{equation*}
\gamma^{f}_{ij} = l, \hbox{ if \  } \mathcal{S}_f (i,j) \in I_{fl}.
\end{equation*}
The larger the value of $\gamma^{f}_{ij}$, the more record $i$ and record $j$ disagree in field $f$. 

Although in principle we could define $\bg_{ij}$ using the original similarity values $\mathcal{S}_f(i,j)$, in the Fellegi-Sunter approach these comparison vectors need to be modeled.  Directly modeling the original $\mathcal{S}_f(i,j)$ requires a customized model per type of comparison given that these similarity measures output values in different ranges depending on their functional form and the field being compared.  By building disagreement levels as ordinal categorical variables, however, we can use a generic model for any type of comparison, as long as its values are categorized.

The selection of the thresholds that define the intervals $I_{fl}$ should correspond with what are considered levels of disagreement, which depend on the specific application at hand and the type of field being compared.  For example, in the simulations and applications presented here we build levels of disagreement according to what we consider to be no disagreement, mild disagreement, moderate disagreement, and extreme disagreement.

\subsection{Blocking}

In practice, when the datafiles are large the record linkage task becomes too computationally expensive.  For example, the cost of computing the comparison data alone grows quadratically since there are $n_1\times n_2$ record pairs.  A common solution to this problem is to partition the datafiles into \emph{blocks} of records determined by information that is thought to be accurately recorded in both datafiles, and then solve the task only within blocks.  For example, in census studies datafiles are often partitioned according to ZIP Codes (postal codes) and then only records sharing the same ZIP Code are attempted to be linked, that is, pairs of records with different ZIP Codes are assumed to be non-matches \citep{HerzogScheurenWinkler07}.  Blocking can be used with any record linkage approach and there are different variations \citep[see][for an extensive survey]{Christen12}.  Our presentation in this paper assumes that no blocking is being used, but in practice if blocking is needed the methodologies can simply be applied independently to each block.

\subsection{The Fellegi-Sunter Decision Rule}\label{ss:FS_rule}

The comparison vector $\bg_{ij}$ alone is insufficient to determine whether $(i,j) \in \mathbf M$, since the variables being compared usually contain random errors and missing values.  \cite{FellegiSunter69} used the log-likelihood ratios
\begin{equation}\label{eq:weights}
w_{ij} = \log \frac{\mathbb{P}(\bg_{ij}|\Delta_{ij}=1)}{\mathbb{P}(\bg_{ij}|\Delta_{ij}=0)}
\end{equation}
as weights to estimate which record pairs are matches.  Expression \eqref{eq:weights} assumes that $\bg_{ij}$ is a realization of a random vector, say, $\bG_{ij}$ whose distribution depends on the matching status $\Delta_{ij}$ of the record pair. Intuitively, if this ratio is large we favor the hypothesis of the pair being a match.  Although this type of likelihood ratio was initially used by \cite{Newcombeetal59} and \cite{NewcombeKennedy62}, the formal procedure proposed by \cite{FellegiSunter69} permits finding two thresholds such that the set of weights can be divided into three groups corresponding to links, non-links, and possible links.  The procedure orders the possible values of $\bg_{ij}$ by their weights in non-increasing order, indexing by the subscript $h$, and determines two values, $h'$ and $h''$, such that
\[
\sum_{h\leq h'-1} \mathbb{P}(\bg_{h}|\Delta_{ij}=0)<\mu\leq\sum_{h\leq h'} \mathbb{P}(\bg_{h}|\Delta_{ij}=0)
\]
and
\[
\sum_{h\geq h''} \mathbb{P}(\bg_{h}|\Delta_{ij}=1)\geq\lambda>\sum_{h\geq h''+1} \mathbb{P}(\bg_{h}|\Delta_{ij}=1),
\]
where $\mu=\mathbb{P}(\text{assign } (i,j) \text{ as link}|\Delta_{ij}=0)$ and $\lambda=\mathbb{P}(\text{assign } (i,j) \text{ as non-link}|\Delta_{ij}=1)$ are two admissible error levels.  Finally, the record pairs are divided into three groups: (1) those with
$h\leq h'-1$ being  links, (2) those with $h\geq h''+1$ being non-links, and (3) those with configurations between  $h'$ and $h''$  requiring  clerical review.  \cite{FellegiSunter69} showed that this decision rule is optimal in the sense that for fixed values of $\mu$ and $\lambda$ it minimizes the probability of sending a pair to clerical review.

We notice that in the presence of missing data the sampling distribution of the comparison vectors changes with each missingness pattern, and therefore so do the thresholds $h'$ and $h''$.  The caveats of this decision rule are discussed in Section \ref{ss:FS_caveats}.

\subsection{Enforcing Maximum One-to-One Assignments}\label{ss:Jaro}

The Fellegi-Sunter decision rule does not enforce the maximum one-to-one assignment restriction in bipartite record linkage.  For example, if records $i$ and $i'$ in $\bX_1$ are very similar but are non-coreferent by assumption, and if both are similar to $j\in \bX_2$, then the Fellegi-Sunter decision rule will probably assign $(i,j)$ and $(i',j)$ as links, which by transitivity would imply a link between $i$ and $i'$ (a contradiction).  As a practical solution to this issue, \cite{Jaro89} proposed a tweak to the Fellegi-Sunter methodology.  The idea is to precede the Fellegi-Sunter decision rule with an optimal assignment of record pairs obtained from a linear sum assignment problem.  The problem can be formulated as the maximization:
\begin{align}\label{eq:LSAP}
\max_{\bD} \quad & \sum_{i=1}^{n_1}\sum_{j=1}^{n_2} w_{ij} \Delta_{ij}\\
\text{ subject to }\quad & \Delta_{ij} \in \{0,1\}, \nonumber\\
&\sum_{i=1}^{n_1} \Delta_{ij} \leq 1,\ \ j=1,2,\ldots,n_2, \nonumber\\
&\sum_{j=1}^{n_2} \Delta_{ij} \leq 1,\ \ i=1,2,\ldots,n_1,\nonumber
\end{align}
with $w_{ij}$ given by Expression \eqref{eq:weights}, where the first constraint ensures that $\bD$ represents a discrete structure, and the second and third constraints ensure that each record of $\bX_2$ is matched with at most one record of $\bX_1$ and vice versa.  This is a maximum-weight bipartite matching problem, or a linear sum assignment problem, for which efficient algorithms exist such as the Hungarian algorithm \citep[see, e.g.,][]{PapadimitriouSteiglitz82}.  The output of this step is a bipartite matching that maximizes the sum of the weights among matched pairs, and the pairs that are not matched by this step are considered non-links.  Although \cite{Jaro89} did not provide a theoretical justification for this procedure, we now show that this can be thought of as a maximum likelihood estimate (MLE) under certain conditions, in particular under a conditional independence assumption of the comparison vectors which is commonly used in practice, such as in the mixture models presented in Section \ref{ss:FSMixtureModel}.

\begin{prop} Under the assumption of the comparison vectors being conditionally independent given the bipartite matching, the solution to the linear sum assignment problem in Expression \eqref{eq:LSAP} is the MLE of the bipartite matching.
\end{prop}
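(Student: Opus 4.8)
First I would set up the likelihood of the bipartite matching $\bD$ and show that, with the conditional distributions of the comparison vectors held fixed, it reduces up to an additive constant to the objective of \eqref{eq:LSAP}. Conditional independence of the comparison vectors given the bipartite matching gives
\begin{equation*}
\mathbb{P}(\{\bg_{ij}\}_{i,j}\mid\bD)=\prod_{i=1}^{n_1}\prod_{j=1}^{n_2}\mathbb{P}(\bg_{ij}\mid\bD),
\end{equation*}
and, as in the mixture models of Section~\ref{ss:FSMixtureModel}, the distribution of $\bg_{ij}$ depends on $\bD$ only through the matching status $\Delta_{ij}$, so that $\mathbb{P}(\bg_{ij}\mid\bD)=\mathbb{P}(\bg_{ij}\mid\Delta_{ij})$. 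Splitting the product over matched and non-matched pairs and taking logarithms yields
\begin{equation*}
\log\mathbb{P}(\{\bg_{ij}\}_{i,j}\mid\bD)=\sum_{(i,j):\,\Delta_{ij}=1}\log\mathbb{P}(\bg_{ij}\mid\Delta_{ij}=1)+\sum_{(i,j):\,\Delta_{ij}=0}\log\mathbb{P}(\bg_{ij}\mid\Delta_{ij}=0).
\end{equation*}

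The key step is to add and subtract $\sum_{i,j}\log\mathbb{P}(\bg_{ij}\mid\Delta_{ij}=0)$, a quantity that does not depend on $\bD$, obtaining
\begin{equation*}
\log\mathbb{P}(\{\bg_{ij}\}_{i,j}\mid\bD)=\text{const}+\sum_{(i,j):\,\Delta_{ij}=1}\log\frac{\mathbb{P}(\bg_{ij}\mid\Delta_{ij}=1)}{\mathbb{P}(\bg_{ij}\mid\Delta_{ij}=0)}=\text{const}+\sum_{i=1}^{n_1}\sum_{j=1}^{n_2}w_{ij}\,\Delta_{ij},
\end{equation*}
where $w_{ij}$ is exactly the Fellegi-Sunter weight of \eqref{eq:weights}. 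Since the constraints $\Delta_{ij}\in\{0,1\}$, $\sum_i\Delta_{ij}\leq 1$, and $\sum_j\Delta_{ij}\leq 1$ appearing in \eqref{eq:LSAP} are precisely the matrix characterization of a bipartite matching given in Section~\ref{s:general_task}, maximizing the likelihood over the parameter space of $\bD$ is literally the optimization problem \eqref{eq:LSAP}; hence every solution of \eqref{eq:LSAP} is an MLE of the bipartite matching.

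The main point to be careful about is the precise reading of \emph{MLE} here: the comparison-vector distributions $\mathbb{P}(\bg_{ij}\mid\Delta_{ij}=1)$ and $\mathbb{P}(\bg_{ij}\mid\Delta_{ij}=0)$ that enter the weights are treated as known, or are plugged in from a previous estimation stage (as in Jaro's procedure), so the statement is really about the profile likelihood of $\bD$ with those distributions held fixed, and I would state this explicitly. A secondary remark, consistent with the comment following the Fellegi-Sunter decision rule, is that with missing data each $\mathbb{P}(\bg_{ij}\mid\Delta_{ij})$ must be read as the distribution under the relevant missingness pattern; this changes the numerical values of the $w_{ij}$ but leaves the argument above intact.
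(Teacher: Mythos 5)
Your argument is correct and is essentially the paper's own proof: the paper writes the likelihood as $\prod_{i,j}\mathbb{P}(\bg_{ij}|\Delta_{ij}=1)^{\Delta_{ij}}\mathbb{P}(\bg_{ij}|\Delta_{ij}=0)^{1-\Delta_{ij}}$, drops the $\bD$-free factor $\prod_{i,j}\mathbb{P}(\bg_{ij}|\Delta_{ij}=0)$, and takes logarithms to reach $\sum_{i,j}w_{ij}\Delta_{ij}$, which is exactly your add-and-subtract step in a slightly different order. Your closing remark that the comparison-vector distributions are held fixed (so this is really a profile/plug-in likelihood in $\bD$) is a reasonable clarification of what the paper leaves implicit, but it does not change the route.
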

\begin{proof} 
\begin{align*}
\hat\bD^{MLE} 
& = \underset{\bD}{\arg\max} \quad \prod_{i,j} \mathbb{P}(\bg_{ij}|\Delta_{ij}=1)^{\Delta_{ij}}\mathbb{P}(\bg_{ij}|\Delta_{ij}=0)^{1-\Delta_{ij}} \\
& = \underset{\bD}{\arg\max} \quad \prod_{i,j} \left[\frac{\mathbb{P}(\bg_{ij}|\Delta_{ij}=1)}{\mathbb{P}(\bg_{ij}|\Delta_{ij}=0)}\right]^{\Delta_{ij}}\\
& = \underset{\bD}{\arg\max} \quad \sum_{i,j} \Delta_{ij}\log\frac{\mathbb{P}(\bg_{ij}|\Delta_{ij}=1)}{\mathbb{P}(\bg_{ij}|\Delta_{ij}=0)},
\end{align*}
where the first line arises under the assumption of the comparison vectors being conditionally independent given the bipartite matching $\bD$, the second line drops a factor that does not depend on $\bD$, and the last line arises from applying the natural logarithm.  We conclude that $\hat\bD^{MLE}$ is the solution to the linear sum assignment problem in Expression \eqref{eq:LSAP}.
\end{proof}
When using $\hat\bD^{MLE}$ there exists the possibility that the matching will include some pairs with a very low matching weight.  \cite{Jaro89} therefore proposed to apply the Fellegi-Sunter decision rule to the pairs that are matched by $\hat\bD^{MLE}$ to determine which of those can actually be declared to be links.

\subsection{Model Estimation}\label{ss:FSMixtureModel}

The presentation thus far relies on the availability of $\mathbb{P}(\cdot|\Delta_{ij}=1)$ and $\mathbb{P}(\cdot|\Delta_{ij}=0)$, but these probabilities need to be estimated in practice.  In principle these distributions could be estimated from previous correctly linked files, but these are seldom available.  As a solution to this problem \cite{Winkler88}, \cite{Jaro89}, \cite{LarsenRubin01}, among others, proposed to model the comparison data using mixture models of the type
\begin{eqnarray}\label{eq:MixtureModel}
\bG_{ij}|\Delta_{ij}=1 & \overset{iid}{\sim}& \mathcal{M}(\bm),\\
\bG_{ij}|\Delta_{ij}=0 & \overset{iid}{\sim}& \mathcal{U}(\bu),\nonumber\\
\Delta_{ij} & \overset{iid}{\sim}& \hbox{Bernoulli}(p),\nonumber
\end{eqnarray}
so that the comparison vector $\bg_{ij}$ is regarded as a realization of a random vector $\bG_{ij}$ whose distribution is either $\mathcal{M}(\bm)$ or $\mathcal{U}(\bu)$ depending on whether the pair is a match or not, respectively, with $\bm$ and $\bu$ representing vectors of parameters and $p$ representing the proportion of matches.  The $\mathcal{M}$ and $\mathcal{U}$ models can be products of individual models for each of the comparison components under a conditional independence assumption \citep{Winkler88,Jaro89}, or can be more complex log-linear models \citep{LarsenRubin01}.  The estimation of these models is usually done using the EM algorithm \citep{DempsterLairdRubin77}.  Notice that the mixture model \eqref{eq:MixtureModel} relies on two key assumptions: the comparison vectors are independent given the bipartite matching, and the matching statuses of the record pairs are independent of one another.  

\subsection{Caveats and Criticism}\label{ss:FS_caveats}

Despite the popularity of the previous methodology for record linkage it has a number of weaknesses.  In terms of modeling the comparison data as a mixture, there is an implicit ``hope'' that the clusters that we obtain are closely associated with matches and non-matches.  In practice, however, the mixture components may not correspond with these groups of record pairs.  In particular, the mixture model will identify two clusters regardless of whether the two files have coreferent records or not.  \cite{Winkler02} mentions conditions for the mixture model to give good results based on experience working with large administrative files at the US Census Bureau: 
\begin{itemize}
\item The proportion of matches should be greater than 5\%.\vspace{-2mm}
\item The classes of matches and non-matches should be well separated.\vspace{-2mm}
\item Typographical error must be relatively low.\vspace{-2mm}
\item There must be redundant fields that overcome errors in other fields.
\end{itemize}
In many practical situations these conditions may not hold, especially when the datafiles contain lots of errors and/or missingness, or when they only have a small number of fields in common.  Furthermore, even if the mixture model is successful at roughly separating matches from non-matches, many-to-one matches can still happen if the assignment step proposed by \cite{Jaro89} is not used, given that the mixture model is fitted without the one-to-one constraint, in particular assuming independence of the matching statuses of the record pairs.  We believe that a more sensible approach is to incorporate this constraint into the model \citep[as in][]{Fortinietal01, Fortinietal02, LiseoTancredi11, TancrediLiseo11, Larsen02, Larsen05,Larsen10, Larsen12,Gutmanetal13} rather than forcing it in a post-processing step.  

Finally, we notice that even if the mixture model is fitted with the one-to-one constraint, the Fellegi-Sunter decision rule alone may still lead to many-to-many assignments and chains of links given that it assumes that once we know the distributions $\mathbb{P}(\cdot|\Delta_{ij}=1)$ and $\mathbb{P}(\cdot|\Delta_{ij}=0)$, the comparison data $\bg_{ij}$ determines the linkage decision.  Furthermore, the optimality of the Fellegi-Sunter decision rule heavily relies on this assumption.  We have argued, however, that the linkage decision for the pair $(i,j)$ not only depends on $\bg_{ij}$ but also depends on the linkage decisions for the other pairs $(i',j)$ and $(i,j')$, $i'\neq i, j'\neq j$.  In Section \ref{ss:BRLpointest} we propose Bayes estimates that allow a rejection option as an alternative to the Fellegi-Sunter decision rule.

\section{A Bayesian Approach to Bipartite Record Linkage}\label{s:BBRL} 

The Bayesian approaches of \cite{Fortinietal01} and \cite{Larsen02, Larsen05, Larsen10, Larsen12} build on the strengths of the Fellegi-Sunter approach but improve on the mixture model implementation by properly treating the parameter of interest as a bipartite matching, therefore avoiding the inconsistencies coming from treating record pairs' matching statuses as independent of one another.  Here we consider an extension of their modeling approach to handle missing data and to take into account multiple levels of partial agreement.  The Bayesian estimation of the bipartite matching (which can be represented by a matching labeling $\bZ$ or by a matching matrix $\bD$) has the advantage of providing a posterior distribution that can be used to derive point estimates and to quantify uncertainty about specific parts of the bipartite matching.  

\subsection{Model for Comparison Data}

Our approach is similar to the mixture model presented in Section \ref{ss:FSMixtureModel}, with the difference that we consider the matching statuses of the record pairs as determined by a bipartite matching:
\begin{eqnarray*}
\bG_{ij}|Z_j=i&\overset{iid}{\sim}&     \mathcal{M}(\bm),\\
\bG_{ij}|Z_j\neq i&\overset{iid}{\sim}& \mathcal{U}(\bu),\nonumber\\
\bZ & \sim & \mathcal{B},\nonumber
\end{eqnarray*}
where $\mathcal{M}(\bm)$ and $\mathcal{U}(\bu)$ are models for the comparison vectors among matches and non-matches, as explained in Section \ref{ss:FSMixtureModel}, and $\mathcal{B}$ represents a prior on the space of bipartite matchings, such as the one presented in Section \ref{ss:BetaMatchingPrior}.  

\subsection{Conditional Independence and Missing Comparisons}\label{ss:CIMAR}

In this section we provide a simple parametrization for the models $\mathcal{M}(\bm)$ and $\mathcal{U}(\bu)$ that allow standard prior specification and make it straightforward to deal with missing comparisons.  Under the assumption of the comparison fields being conditionally independent (CI) given the matching statuses of the record pairs we obtain that the likelihood of the comparison data can be written as 
\begin{align}\label{eq:lhood}
\mathcal{L}(\bZ,\Phi|\bg) = \prod_{i=1}^{n_1}\prod_{j=1}^{n_2}\prod_{f=1}^{F}\prod_{l=0}^{L_f}\Big[
m_{fl}^{I(Z_j=i)}u_{fl}^{I(Z_j\neq i)}\Big]^{I(\gamma^{f}_{ij}=l)},
\end{align}
where $m_{fl}=\mathbb{P}(\Gamma^{f}_{ij}=l|Z_j=i)$ denotes the probability of a match having level $l$ of disagreement in field $f$, and $u_{fl}=\mathbb{P}(\Gamma^{f}_{ij}=l|Z_j\neq i)$ represents the analogous probability for non-matches.  We denote $\bm_f=(m_{f1},\dots,m_{fL_f})$, $\bu_f=(u_{f1},\dots,u_{fL_f})$, $\bm=(\bm_1,\dots,\bm_F)$, $\bu=(\bu_1,\dots,\bu_F)$, and $\Phi=(\bm,\bu)$.  This model is an extension of the one considered by \cite{Larsen02, Larsen05, Larsen10, Larsen12}, which in turn is a parsimonious simplification of the one in \cite{Fortinietal01}, who only considered binary comparisons.

We now need to modify this model to accommodate missing comparison criteria since in practice it is rather common to find records with missing fields of information, which lead in turn to missing comparisons for the corresponding record pairs.  For example, if a certain field that is being used to compute comparison data is missing for record $i$, then the vector $\bg_{ij}$ will be incomplete, regardless of whether the field is missing for record $j$.

A simple way to deal with this situation is to assume that the missing comparisons occur at random
 \citep[MAR assumption in][]{LittleRubin02}, and therefore we can base our inferences on the marginal distribution of the observed comparisons \citep[][p. 90]{LittleRubin02}.  Under the parametrization of Equation \eqref{eq:lhood} and the MAR assumption, after marginalizing over the missing comparisons it can be easily seen that the likelihood of the observed comparison data can be written as 
\begin{align}\label{eq:obs_lhood}
\mathcal{L}(\bZ,\Phi|\bg^{obs}) 
&= \prod_{f=1}^{F}\prod_{l=0}^{L_f} m_{fl}^{a_{fl}(\bZ)}u_{fl}^{b_{fl}(\bZ)},
\end{align}
with 
\begin{align}\label{eq:ab}
a_{fl}(\bZ) &= \sum_{i,j}I_{obs}(\bg_{ij}^f)I(\gamma_{ij}^{f}=l)I(Z_j=i),\\
b_{fl}(\bZ) &= \sum_{i,j}I_{obs}(\bg_{ij}^f)I(\gamma_{ij}^{f}=l)I(Z_j\neq i),\nonumber
\end{align}
where $I_{obs}(\cdot)$ is the indicator of whether its argument is observed.  For a given matching labeling $\bZ$, $a_{fl}(\bZ)$ and $b_{fl}(\bZ)$ represent the number of matches and non-matches with observed disagreement level $l$ in comparison $f$.  From Equations \eqref{eq:obs_lhood} and \eqref{eq:ab} we can see that the combination of the CI and MAR assumptions allow us to ignore the comparisons that are not observed while modeling the observed comparisons in a simple fashion.

Under the previous parametrization it is easy to use the independent conjugate priors $\bm_f\sim \text{Dirichlet} (\alpha_{f0}, \dots, \alpha_{fL_f})$ and $\bu_f\sim \text{Dirichlet}(\beta_{f0},\dots,\beta_{fL_f})$ for $f=1,\dots,F$.

\subsection{Beta Prior for Bipartite Matchings}\label{ss:BetaMatchingPrior}

We now construct a prior distribution for matching labelings $\bZ=(Z_{1},Z_{2},\dots,Z_{n_2})$ where $Z_j\in\{1,2,\dots,n_1,n_1+j\}$ and $Z_j\neq Z_{j'}$.  We start by sampling the indicators of which records in file $\bX_2$ have a match.  Let $I(Z_j\leq n_1)\overset{iid}{\sim}\text{Bernoulli}(\pi)$, $j=1,\dots,n_2$, where $\pi$ represents the proportion of matches expected a priori as a fraction of the smallest file $\bX_2$.  We take $\pi$ to be distributed according to a Beta$(\alpha_\pi,\beta_\pi)$ a priori.  In this formulation $n_{12}(\bZ)=\sum_{j=1}^{n_2}I(Z_j\leq n_1)$ represents the number of matches according to matching labeling $\bZ$, and it is distributed according to a Beta-Binomial$(n_2,\alpha_\pi,\beta_\pi)$, after marginalizing over $\pi$.  Conditioning on knowing which records in file $\bX_2$ have a match, that is, conditioning on $\{I(Z_j\leq n_1)\}_{j=1}^{n_2}$, all the possible bipartite matchings are taken to be equally likely.  There are $n_1!/(n_1-n_{12}(\bZ))!$ such bipartite matchings.  Finally, the probability mass function for $\bZ$ is given by
\begin{equation*}
\mathbb{P}(\bZ|\alpha_\pi,\beta_\pi)=\frac{(n_1-n_{12}(\bZ))!}{n_1!}\frac{\text{B}(n_{12}(\bZ)+\alpha_\pi,n_2-n_{12}(\bZ)+\beta_\pi)}{\text{B}(\alpha_\pi,\beta_\pi)},
\end{equation*}
where B$(\cdot,\cdot)$ represents the Beta function.  We shall refer to this distribution as the \emph{beta distribution for bipartite matchings}.  Notice that in this formulation the hyper-parameters $\alpha_\pi$ and $\beta_\pi$ can be used to incorporate prior information on the amount of overlap between the files.  This prior was first proposed by \cite{Fortinietal01, Fortinietal02} (with fixed $\pi$) and \cite{Larsen05,Larsen10} in terms of matching matrices.

\subsection{Gibbs Sampler}\label{ss:Gibbs}

We now present a Gibbs sampler to explore the joint posterior of $\bZ$  and $\Phi$ given the observed comparison data $\bg_{obs}$, for the likelihood and priors presented before.  Although it is easy to marginalize over $\Phi$ and derive a collapsed Gibbs sampler that iterates only over $\bZ$, we present the expanded version to show some connections with the Fellegi-Sunter approach. 

We start the Gibbs sampler with an empty bipartite matching, that is $Z_j^{[0]}=n_1+j$ for all $j\in \{1,\dots,n_2\}$.  For a current value of the matching labeling $\bZ^{[t]}$, we obtain the next values $\bm_f^{[t+1]}=(m_{f0}^{[t+1]},\dots,m_{fL_f}^{[t+1]})$, $\bu_f^{[t+1]}=(u_{f0}^{[t+1]},\dots,u_{fL_f}^{[t+1]})$, for $f=1,\dots,F$, and $\bZ^{[t+1]}=(Z_1^{[t+1]},\dots,Z_{n_2}^{[t+1]})$ as follows:
\begin{enumerate}
\item For $f=1,\dots,F$, sample 
\begin{align*}
\bm_f^{[t+1]}|\bg^{obs}, \bZ^{[t]} &\sim \text{Dirichlet}(a_{f0}(\bZ^{[t]})+\alpha_{f0},\dots,a_{fL_f}(\bZ^{[t]})+\alpha_{fL_f}),
\end{align*}
and 
\begin{align*}
\bu_f^{[t+1]}|\bg^{obs}, \bZ^{[t]} &\sim \text{Dirichlet}(b_{f0}(\bZ^{[t]})+\beta_{f0},\dots,b_{fL_f}(\bZ^{[t]})+\beta_{fL_f}).
\end{align*}
Collect these new draws into $\Phi^{[t+1]}$.  The functions $a_{fl}(\cdot)$ and $b_{fl}(\cdot)$ are presented in Equation \eqref{eq:ab}.

\item Sample the entries of $\bZ^{[t+1]}$ sequentially.  Having sampled the first $j-1$ entries of $\bZ^{[t+1]}$, we define $\bZ_{-j}^{[t+(j-1)/n_2]}=(Z_1^{[t+1]},\dots,Z_{j-1}^{[t+1]},Z_{j+1}^{[t]},\dots,Z_{n_2}^{[t]})$, and sample a new label $Z_j^{[t+1]}$, with the probability of selecting label $q\in\{1,\dots,n_1,n_1+j\}$ given by $p_{qj}(\bZ_{-j}^{[t+(j-1)/n_2]}, \Phi^{[t+1]})$,
which can be expressed as (for generic $\bZ_{-j}$ and $\Phi$):
\begin{equation}\label{eq:pqjGibbs}
p_{qj}(\bZ_{-j}, \Phi)
\propto 
 \left\{
  \begin{array}{ll}
	 \exp[w_{qj}(\Phi)] I(Z_{j'}\neq q, \forall\ j'\neq j),
			& \hbox{ if $q\leq n_1$;} \\\\
			
    [n_1-n_{12}(\bZ_{-j})] \frac{n_2-n_{12}(\bZ_{-j})-1+\beta_\pi}{n_{12}(\bZ_{-j})+\alpha_\pi}, & \hbox{ if $q=n_1+j$;} 
  \end{array}
\right.
\end{equation}
and  $w_{qj}(\Phi) = \log [ \mathbb{P}(\bg^{obs}_{qj}|Z_j=q,\bm)/\mathbb{P}(\bg^{obs}_{qj}|Z_j\neq q,\bu) ]$ can be expressed as
\begin{align}
w_{qj}(\Phi) &= \sum_{f=1}^F I_{obs}(\bg_{qj}^f)\sum_{l=0}^{L_f}\log\Big(\frac{m_{fl}}{u_{fl}}\Big)I(\gamma_{qj}^{f}=l), \label{eq:WinklerWeight}
\end{align}
\end{enumerate}
for $q\leq n_1$.  From Equations \eqref{eq:pqjGibbs} and \eqref{eq:WinklerWeight} we can see that at a certain step of the Gibbs sampler the assignment of a record $i$ in file $\bX_1$ as a match of record $j$ will depend on the weight $w_{ij}(\Phi^{[t+1]})$, as long as record $i$ does not match any other record of file $\bX_2$ according to $\bZ_{-j}^{[t+(j-1)/n_2]}$.  These are essentially the same weights used in the Fellegi-Sunter approach to record linkage (Section \ref{s:FS}).  In particular, if there are no missing comparisons, Equation \eqref{eq:WinklerWeight} represents the composite weight proposed by \cite{Winkler90Strings} to account for partial agreements.  Equation \eqref{eq:pqjGibbs} also indicates that the probability of not matching record $j$ with any record in file $\bX_1$ depends on the number of unmatched records in file $\bX_1$ and the odds of a non-match in file $\bX_2$ according to $\bZ_{-j}^{[t+(j-1)/n_2]}$.  The lower the number of current matches, the larger the probability of not matching record $j$.

When using a flat prior on the space of bipartite matchings we obtain an expression similar to Equation \eqref{eq:pqjGibbs}, but with a probability of leaving record $j$ unmatched proportional to 1, indicating that under that prior the odds of a match do not take into account the number of existing matches.  In practice this translates into larger numbers of false-matches under the flat prior for scenarios where the actual overlap of the datafiles is small, given that the evidence for a match does not have to be as strong as when using a beta prior for bipartite matchings.  The point estimator $\hat\bD^{MLE}$ presented in Section \ref{ss:Jaro} suffers a similar phenomenon, as we show in Section \ref{s:simulations}.

\section{Bayes Estimates of Bipartite Matchings}\label{ss:BRLpointest}

From a Bayesian theoretical point of view \citep[e.g.,][]{Berger85,BernardoSmith94} we can obtain different point estimates $\hat\bZ$ of the bipartite matching using the posterior distribution of $\bZ$ and different loss functions $L(\bZ,\hat\bZ)$.  The Bayes estimate for $\bZ$ is the bipartite matching $\hat\bZ$ that minimizes the posterior expected loss $\mathbb{E}[L(\bZ,\hat\bZ)|\bg^{obs}]=\sum_{\bZ}L(\bZ,\hat\bZ)\mathbb{P}(\bZ|\bg^{obs})$.  In this section we present a class of additive loss functions that can be used to derive different Bayes estimates.

In some scenarios some records may have a large matching uncertainty, and therefore a point estimate for the whole bipartite matching may not be appropriate.  In Figure \ref{f:UncertainMatch} we present a toy example where a record $j$ in file $\bX_2$ has three possible matches $i,\ i',\ i''$ in file $\bX_1$, making it difficult to take a reliable decision.  The approach presented below allows the possibility of leaving uncertain parts of the bipartite matching unresolved.  Decision rules in the classification literature akin to the ones presented here are said to have a \emph{rejection option}  \citep[see, e.g.,][]{Ripley96,Hu14}.  The rejection option in our context refers to the possibility of not taking a linkage decision for a certain record.  These unresolved cases can, for example, be hand-matched as part of a clerical review.  We refer to point estimates with a rejection option as \emph{partial estimates}, as opposed to \emph{full estimates} which assign a linkage decision to each record.

\begin{figure*}[t]
\centering
		\centerline{\includegraphics[width=0.7\linewidth]{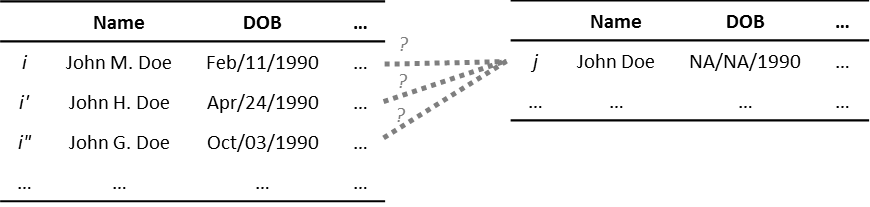}}
  \begin{minipage}[b]{1\textwidth}
  \caption{Toy example of uncertain matching.  DOB: Date of birth.}
\label{f:UncertainMatch}
	\end{minipage} 
\end{figure*}

We work in terms of matching labelings $\bZ$ instead of matching matrices $\bD$, which means that we target questions of the type ``which record in file $\bX_1$ (if any) should be matched with record $j$ in file $\bX_2$?'' rather than ``do records $i$ and $j$ match?''  Working with $\bZ$ makes it explicit that in bipartite record linkage there are $n_2$ linkage decisions to be made rather than $n_1\times n_2$.  

We represent a Bayes estimate here as a vector $\hat\bZ = (\hat Z_{1}, \dots, \hat Z_{n_2})$, where $\hat Z_j\in\{1, \dots, n_1,\\ {n_1+j}, R\}$, with $R$ representing the rejection option.  We propose to assign different positive losses to different types of errors and compute the overall loss additively, as 
\begin{equation}\label{eq:LossR}
L(\bZ,\hat\bZ) = \sum_{j=1}^{n_2} L(Z_j,\hat Z_j),
\end{equation}
with
\begin{equation}\label{eq:LossR_j}
L(Z_j,\hat Z_j)=\left\{
  \begin{array}{ll}
    0, & \hbox{ if } Z_j=\hat Z_j; \\
		\lambda_{R}, & \hbox{ if } \hat Z_j= R; \\
		\lambda_{10}, & \hbox{ if } Z_j\leq n_1, \hat Z_j=n_1+j; \\
		\lambda_{01}, & \hbox{ if } Z_j=n_1+j, \hat Z_j\leq n_1;\\
		\lambda_{11'}, & \hbox{ if } Z_j,\hat Z_j\leq n_1, Z_j\neq \hat Z_j;\\
  \end{array}
\right.
\end{equation}
that is, $\lambda_{R}$ represents the loss from not taking a decision (rejection), $\lambda_{10}$ is the loss from a false non-match decision, $\lambda_{01}$ is the loss from a false match decision when the record does not actually match any other record, and $\lambda_{11'}$ is the loss from a false match decision when the record actually matches a different record than the one assigned to it.  The posterior expected loss is given by
\begin{equation*}\label{eq:PostExpLoss}
\mathbb{E}[L(\bZ,\hat\bZ)|\bg^{obs}] = \sum_{j=1}^{n_2}\varepsilon_{j}(\hat Z_j),
\end{equation*}
where
\begin{equation}\label{eq:PostExpLoss_j}
\varepsilon_{j}(\hat Z_j)=\mathbb{E}[L(Z_j,\hat Z_j)|\bg^{obs}]=\left\{
  \begin{array}{ll}
    \lambda_{R}, & \hbox{ if } \hat Z_j=R; \\		
		\lambda_{10} \mathbb{P}(Z_j\neq n_1+j|\bg^{obs}), & \hbox{ if } \hat Z_j=n_1+j; \\
		\lambda_{01} \mathbb{P}(Z_j= n_1+j|\bg^{obs})+&\\
		\lambda_{11'}\mathbb{P}(Z_j\notin \{ i, n_1+j\}|\bg^{obs}), & \hbox{ if } \hat Z_j=i\leq n_1.\\
  \end{array}
\right.
\end{equation}
The Bayes estimate can be obtained, in general, by solving (minimizing) a linear sum assignment problem with a $(n_1+2n_2)\times n_2$ matrix of weights with entries
\begin{equation*}
v_{ij}=\left\{
  \begin{array}{ll}
		\varepsilon_{j}(i), & \hbox{ if } i\leq n_1; \\		
		\varepsilon_{j}(n_1+j), & \hbox{ if } i=n_1+j; \\
		\lambda_{R}, & \hbox{ if } i=2n_1+j;\\
		\infty, & \hbox{otherwise}.
  \end{array}
\right.
\end{equation*}
In this matrix the first $n_1$ rows accommodate the possibility of records in file $\bX_2$ linking to any record in file $\bX_1$, the next $n_2$ rows accommodate the possibility of records in $\bX_2$ not linking to any record in $\bX_1$, and the last $n_2$ rows represent the possibility of not taking linkage decisions (rejections) for the records in $\bX_2$.  Rather than working with this general formulation we now focus on  some important particular cases that lead to simple derivations of the Bayes estimates.  

\subsection{Closed-Form Full Estimates with Conservative Link Assignments}\label{ss:Zhat}

We first consider the case where we are required to output decisions for all records.  In this case fixing $\lambda_{R}=\infty$ prevents outputting rejections.  Letting $\lambda_{10}\leq \lambda_{01}, \lambda_{11'}$ represents the idea that the loss from a false non-match is not higher than the possible losses from a false match.  Furthermore, the error of matching $j$ with a record $i$ when it actually matches another $i'\neq i$ implies that record $i'$ will not be matched correctly either, and therefore it is reasonable to take $\lambda_{11'}$ to be much larger than the other losses.  In particular we work with $\lambda_{11'}\geq \lambda_{10} + \lambda_{01}$.

\begin{theo}  \label{prop:full_link_gen}
If $\lambda_{R}=\infty, 0<\lambda_{10}\leq \lambda_{01}$, and $\lambda_{11'}\geq \lambda_{10} + \lambda_{01}$ in the loss function given by Equations \eqref{eq:LossR} and \eqref{eq:LossR_j}, the Bayes estimate of the bipartite matching is obtained from $\hat\bZ = (\hat Z_{1}, \dots, \hat Z_{n_2})$, where $\hat Z_j$ is given by
\begin{equation*}\label{eq:hatZjF}
\hat Z_j=\left\{
  \begin{array}{ll}
    i, & \hbox{ if } \mathbb{P}(Z_j=i|\bg^{obs})> \frac{\lambda_{01}}{\lambda_{01}+\lambda_{10}}+\frac{\lambda_{11'}-\lambda_{01}-\lambda_{10}}{\lambda_{01}+\lambda_{10}}\mathbb{P}(Z_j\notin\{ i, n_1+j\}|\bg^{obs}); \\
		n_1+j, & \hbox{ otherwise. } 
  \end{array}
\right.
\end{equation*}
\end{theo}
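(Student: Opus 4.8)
The plan is to exploit the additivity of the loss. By Equations~\eqref{eq:LossR} and \eqref{eq:LossR_j} the posterior expected loss decomposes as $\sum_{j=1}^{n_2}\varepsilon_j(\hat Z_j)$ with $\varepsilon_j$ as in Equation~\eqref{eq:PostExpLoss_j}. Since $\lambda_R=\infty$, the rejection option is never optimal, so for each $j$ it suffices to minimize $\varepsilon_j(\hat Z_j)$ over $\hat Z_j\in\{1,\dots,n_1,n_1+j\}$. I would first solve these $n_2$ subproblems \emph{as if they were unconstrained} (ignoring for the moment the requirement that $\hat\bZ$ be a bipartite matching); the last step recovers feasibility.

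Next I would carry out the per-coordinate minimization. Write $p_i:=\mathbb{P}(Z_j=i|\bg^{obs})$ for $i\le n_1$ and $p_0:=\mathbb{P}(Z_j=n_1+j|\bg^{obs})$, so that $\mathbb{P}(Z_j\notin\{i,n_1+j\}|\bg^{obs})=1-p_i-p_0$ and hence $\varepsilon_j(i)=\lambda_{01}p_0+\lambda_{11'}(1-p_i-p_0)$ is strictly decreasing in $p_i$. Thus among the labels $i\le n_1$ the minimizer is $i^\star:=\arg\max_{i\le n_1}p_i$, and it remains to compare $\varepsilon_j(i^\star)$ with $\varepsilon_j(n_1+j)=\lambda_{10}(1-p_0)$. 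Setting $q:=\mathbb{P}(Z_j\notin\{i^\star,n_1+j\}|\bg^{obs})$, substituting $p_0=1-p_{i^\star}-q$, and rearranging $\varepsilon_j(i^\star)<\varepsilon_j(n_1+j)$ gives $p_{i^\star}>\frac{\lambda_{01}}{\lambda_{01}+\lambda_{10}}+\frac{\lambda_{11'}-\lambda_{01}-\lambda_{10}}{\lambda_{01}+\lambda_{10}}q$, which is exactly the inequality in the statement (with $i=i^\star$), ties being broken toward $n_1+j$ to match the ``otherwise'' branch. This is routine algebra.

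The crux — and the only place the hypotheses on the losses are used — is verifying that the vector $\hat\bZ$ assembled from these coordinatewise minimizers is a genuine bipartite matching, i.e. that no two distinct $j,j'$ receive the same label $i\le n_1$. Because $\lambda_{11'}\ge\lambda_{10}+\lambda_{01}$ the coefficient of $q$ is nonnegative, so the threshold is at least $\frac{\lambda_{01}}{\lambda_{01}+\lambda_{10}}$, which is at least $\tfrac12$ since $\lambda_{10}\le\lambda_{01}$; hence $\hat Z_j=i$ forces $\mathbb{P}(Z_j=i|\bg^{obs})>\tfrac12$. Since $\bZ$ is a bipartite matching with probability one, for fixed $i$ the events $\{Z_j=i\}$ are mutually exclusive over $j$, so $\sum_j\mathbb{P}(Z_j=i|\bg^{obs})\le 1$, and at most one $j$ can exceed $\tfrac12$; this simultaneously shows $\hat\bZ$ is feasible and that the rule in the theorem is well defined (at most one $i$ satisfies its inequality for each $j$). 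Finally, since this $\hat\bZ$ minimizes $\sum_j\varepsilon_j(\hat Z_j)$ over the larger set of all label vectors and happens to lie in the set of bipartite matchings, it is also the minimizer over bipartite matchings, i.e. the Bayes estimate. I expect the feasibility argument to be the only non-mechanical step; the rest is bookkeeping with Equation~\eqref{eq:PostExpLoss_j}.
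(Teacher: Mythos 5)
Your proposal is correct and takes essentially the same route as the paper's own proof: minimize each $\varepsilon_j(\hat Z_j)$ coordinatewise, observe that the best label $i\leq n_1$ is the one maximizing $\mathbb{P}(Z_j=i|\bg^{obs})$ and that comparing it with $\varepsilon_j(n_1+j)$ yields exactly the stated threshold, and then use $\lambda_{01}/(\lambda_{01}+\lambda_{10})\geq 1/2$ together with the disjointness bound $\sum_{j}\mathbb{P}(Z_j=i|\bg^{obs})\leq 1$ to conclude that the unconstrained coordinatewise solution is a feasible bipartite matching and hence the Bayes estimate. The only differences are cosmetic (your explicit algebra in terms of $p_0$, $p_i$, $q$ and the remark on well-definedness), so nothing further is needed.
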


\begin{proof}
The strategy for the proof is to obtain the optimal marginal value of each $\hat Z_j$ by minimizing each term $\varepsilon_{j}(\hat Z_j)$ shown in Equation \eqref{eq:PostExpLoss_j}.  If this approach leads to a proper bipartite matching then it corresponds to the optimal solution of the problem given that the constraints $\hat Z_j\neq \hat Z_{j'}$ for $j\neq j'$ would hold.  

To find the optimal value of $\hat Z_j$ we can start by finding the optimal label among $\{1,\dots,n_1\}$.  It is easy to see that $i^*$  minimizes $\varepsilon_{j}(i)$ if and only if it maximizes $\mathbb{P}(Z_j=i|\bg^{obs})$.  Now, if $i^*$ is the best possible match for $j$, the decision of matching $j$ with $i^*$ over not matching $j$ with any other record depends on whether $\varepsilon_{j}(n_1+j)>\varepsilon_{j}(i^*)$,  which can easily be checked to be equivalent to the inequality stated in the theorem.  

Given that this solution was obtained ignoring the constraints that require $\hat Z_j\neq \hat Z_{j'}$ for $j\neq j'$ we need to make sure that it leads to a bipartite matching.  Indeed, given the conditions on  $\lambda_{10}, \lambda_{01}$ and $\lambda_{11'}$ we have $\lambda_{01}/(\lambda_{01}+\lambda_{10})\geq 1/2$ and $(\lambda_{11'}-\lambda_{01}-\lambda_{10})/(\lambda_{01}+\lambda_{10})\geq 0$, which imply that under this solution $\hat Z_{j}=i$ only if $\mathbb{P}(Z_{j}=i|\bg^{obs})>1/2$.  Since we are working with a posterior distribution on bipartite matchings we necessarily have that $\sum_{j=1}^{n_2}\mathbb{P}(Z_{j}=i|\bg^{obs})\leq 1$, given that the events $Z_1=i,Z_2=i,\dots,Z_{n_2}=i$ are disjoint.  This implies that $\mathbb{P}(Z_{j'}=i|\bg^{obs})< 1/2$ for all $j'\neq j$, and so $\hat Z_{j'}\neq i$ for all $j'\neq j$.  We conclude that the solution given by the theorem satisfies the constrained problem. 
\end{proof}

The conservative nature of the Bayes estimates obtained from Theorem \ref{prop:full_link_gen} are evidenced from the fact that to declare a match between records $j$ and $i$ we require $\mathbb{P}(Z_j=i|\bg^{obs})$ to be at least $\lambda_{01}/(\lambda_{01}+\lambda_{10})\geq 1/2$.  Furthermore, in cases where record $j$ has a non-zero probability of matching other records besides $i$, that is, when $\mathbb{P}(Z_j\notin\{ i, n_1+j\}|\bg^{obs})>0$, if $\lambda_{11'}> \lambda_{10} + \lambda_{01}$ the decision rule in Theorem \ref{prop:full_link_gen} is extra  conservative increasing the threshold $\lambda_{01}/(\lambda_{01}+\lambda_{10})$ for declaring matches. 

The Bayes estimate of Theorem \ref{prop:full_link_gen} has an important particular case.  \cite{TancrediLiseo11} derived a decision rule using the entrywise zero-one loss for matching matrices
\begin{equation*}
L_{e01}(\bD,\hat \bD)=\sum_{i=1}^{n_1}\sum_{j=1}^{n_2} I(\Delta_{ij}\neq\hat\Delta_{ij}),
\end{equation*}
which is equivalent to our additive loss function when $\lambda_{01}=\lambda_{10}=1$, $\lambda_{11'}=2$ in Equations \eqref{eq:LossR} and \eqref{eq:LossR_j}, and therefore we obtain the following corollary.
\begin{coro}[\cite{TancrediLiseo11}]
If $\lambda_{R}=\infty, \lambda_{10}= \lambda_{01}=1$, and $\lambda_{11'}=2$ in the loss function given by Equations \eqref{eq:LossR} and \eqref{eq:LossR_j}, the Bayes estimate of the bipartite matching is obtained from $\hat\bZ^{e01} = (\hat Z^{e01}_{1}, \dots, \hat Z^{e01}_{n_2})$, where $\hat Z^{e01}_j$ is given by
\begin{equation}\label{eq:Ze01}
\hat Z_j^{e01}=\left\{
  \begin{array}{ll}
    i, & \hbox{ if } \mathbb{P}(Z_j=i|\bg^{obs})> 1/2; \\
		n_1+j, & \hbox{ otherwise. } 
  \end{array}
\right.
\end{equation}
\end{coro}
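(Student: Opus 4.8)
The plan is to obtain this corollary as an immediate specialization of Theorem~\ref{prop:full_link_gen}, after first checking that the stated loss parameters satisfy that theorem's hypotheses and that the additive loss with these parameters reduces to the entrywise zero-one loss $L_{e01}$ used by \cite{TancrediLiseo11}.

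First I would verify the hypotheses of Theorem~\ref{prop:full_link_gen}: with $\lambda_{R}=\infty$, $\lambda_{10}=\lambda_{01}=1$ and $\lambda_{11'}=2$ we have $0<\lambda_{10}\leq\lambda_{01}$ (an equality here) and $\lambda_{11'}=2=\lambda_{10}+\lambda_{01}$, so all conditions hold and the theorem applies. Then I would substitute these values into the decision rule of Theorem~\ref{prop:full_link_gen}: the threshold $\lambda_{01}/(\lambda_{01}+\lambda_{10})$ becomes $1/2$, and the coefficient $(\lambda_{11'}-\lambda_{01}-\lambda_{10})/(\lambda_{01}+\lambda_{10})$ becomes $(2-1-1)/2=0$, so the term involving $\mathbb{P}(Z_j\notin\{i,n_1+j\}\,|\,\bg^{obs})$ drops out entirely. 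Hence $\hat Z_j=i$ exactly when $\mathbb{P}(Z_j=i\,|\,\bg^{obs})>1/2$, which is precisely Equation~\eqref{eq:Ze01}.

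To justify the attribution to \cite{TancrediLiseo11}, I would also check that for each column $j$ the per-column additive loss of Equation~\eqref{eq:LossR_j} with $\lambda_{01}=\lambda_{10}=1$ and $\lambda_{11'}=2$ equals $\sum_{i=1}^{n_1} I(\Delta_{ij}\neq\hat\Delta_{ij})$ under the identification $\Delta_{ij}=I(Z_j=i)$: a correct label contributes $0$; a false non-match or a false match flips exactly one entry of column $j$, contributing $1$; and assigning a wrong record $i\neq Z_j$ flips exactly two entries (a spurious $1$ in row $i$ and a missed $1$ in row $Z_j$), contributing $2$. Summing over $j$ gives $L_{e01}(\bD,\hat\bD)$, so the two loss functions---and therefore their Bayes estimates---coincide, and the rule follows from the specialization just computed.

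I do not expect a genuine obstacle: the corollary is routine bookkeeping on top of Theorem~\ref{prop:full_link_gen}. The only point requiring a moment of care is confirming that the second summand in the threshold genuinely vanishes rather than merely simplifying, which is immediate from $\lambda_{11'}=\lambda_{10}+\lambda_{01}$; and checking that the posterior-consistency argument for a valid bipartite matching carries over, which it does automatically since it is already established in the proof of Theorem~\ref{prop:full_link_gen} under weaker conditions.
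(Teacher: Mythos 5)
Your proposal is correct and matches the paper's route exactly: the paper presents this corollary as a direct specialization of Theorem~\ref{prop:full_link_gen} (threshold $\lambda_{01}/(\lambda_{01}+\lambda_{10})=1/2$, vanishing coefficient since $\lambda_{11'}=\lambda_{10}+\lambda_{01}$), with the equivalence between the additive loss at these parameter values and $L_{e01}$ noted in the surrounding text just as you verify it. Your explicit column-by-column accounting of the flipped entries is a slightly fuller write-up of that equivalence than the paper gives, but it is the same argument.
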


\subsection{Closed-Form Partial Estimates}\label{ss:Zhat_rejections}

To emphasize the importance of the rejection option, let us refer to the toy example of Figure \ref{f:UncertainMatch}, where a record $j$ in file $\bX_2$ has three possible matches $i,\ i',\ i''$ in file $\bX_1$.  If each of these matches is equally likely we necessarily have that $\mathbb{P}(Z_{j}=i|\bg^{obs})< 1/2$, and likewise for $i'$ and $i''$.  In this case the optimal decision under the  entrywise zero-one loss is to not match $j$ with any record in file $\bX_1$.  On the other hand, in the case of the bipartite matching MLE $\hat\bD^{MLE}$ (Section \ref{ss:Jaro}), if the three weights $w_{ij}$, $w_{i'j}$, $w_{i''j}$ are equal and positive then this estimate will arbitrarily match $j$ with one of $i$, $i'$, or $i''$ (if no other records are involved).  This scenario illustrates the advantage of using a decision rule that allows us to leave uncertain parts of the bipartite matching unresolved.  

We now present a particular case of our additive loss function that allows us to output rejections and leads to closed-form Bayes estimates.  At the end of this section we explain why the constraints that we consider on the individual losses are meaningful in practice.  

\begin{theo}\label{prop:partial_link_gen}
If either 1) $\lambda_{11'}\geq \lambda_{01} \geq 2\lambda_R>0$, or 2) $\lambda_{01}\geq \lambda_{10}>0$ and $\lambda_{11'}\geq \lambda_{01}+\lambda_{10}$, in the loss function given by Equations \eqref{eq:LossR} and \eqref{eq:LossR_j}, the Bayes estimate of the bipartite matching can be obtained from $\hat\bZ = (\hat Z_{1}, \dots, \hat Z_{n_2})$, with $\hat Z_j={\arg\min}_{\hat Z_j}\ \varepsilon_{j}(\hat Z_j)$, where $\varepsilon_{j}(\hat Z_j)$ is given by Expression \eqref{eq:PostExpLoss_j}.
\end{theo}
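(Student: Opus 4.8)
The plan is to reuse the relaxation idea from the proof of Theorem~\ref{prop:full_link_gen}. The general Bayes estimate solves the $(n_1+2n_2)\times n_2$ linear sum assignment problem described above; what this theorem asserts is that under either hypothesis this problem \emph{decouples} into $n_2$ independent one-dimensional minimizations. So first I would drop the coupling constraints $\hat Z_j\neq \hat Z_{j'}$ for $j\neq j'$, observe that then $\mathbb{E}[L(\bZ,\hat\bZ)|\bg^{obs}]=\sum_{j=1}^{n_2}\varepsilon_j(\hat Z_j)$ is minimized by minimizing each $\varepsilon_j(\hat Z_j)$ separately over $\hat Z_j\in\{1,\dots,n_1,n_1+j,R\}$, and then verify that the resulting $\hat\bZ$ is a bona fide bipartite matching; since the relaxation's optimum is feasible it must also be the constrained optimum, which is the claim.

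Next I would describe the coordinate-wise minimization explicitly. Among labels $i\le n_1$, Expression~\eqref{eq:PostExpLoss_j} rewrites as $\varepsilon_j(i)=\lambda_{01}\mathbb{P}(Z_j=n_1+j|\bg^{obs})+\lambda_{11'}\bigl(1-\mathbb{P}(Z_j=i|\bg^{obs})-\mathbb{P}(Z_j=n_1+j|\bg^{obs})\bigr)$, which is strictly decreasing in $\mathbb{P}(Z_j=i|\bg^{obs})$ because $\lambda_{11'}>0$; hence the best match $i^\ast$ is any maximizer of $\mathbb{P}(Z_j=i|\bg^{obs})$. The coordinate-wise optimum $\hat Z_j$ is then whichever of $i^\ast$, $n_1+j$, and $R$ has the smallest of the three values $\varepsilon_j(i^\ast)$, $\lambda_{10}\mathbb{P}(Z_j\neq n_1+j|\bg^{obs})$, and $\lambda_R$, with ties broken first toward $R$ and then toward $n_1+j$.

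The crux is feasibility: I must show that whenever the rule sets $\hat Z_j=i$ for some $i\le n_1$, then $\mathbb{P}(Z_j=i|\bg^{obs})>1/2$. Granting this, feasibility follows exactly as in Theorem~\ref{prop:full_link_gen}: the events $\{Z_1=i\},\dots,\{Z_{n_2}=i\}$ are disjoint under a posterior supported on bipartite matchings, so $\sum_j\mathbb{P}(Z_j=i|\bg^{obs})\le 1$ forces $\mathbb{P}(Z_{j'}=i|\bg^{obs})<1/2$ for all $j'\neq j$, whence no other coordinate is assigned to $i$. For the $1/2$ bound I would split into the two cases. In case~(2), $\hat Z_j=i$ being optimal gives in particular $\varepsilon_j(i)\le\varepsilon_j(n_1+j)$, and the same algebra as in Theorem~\ref{prop:full_link_gen} (using $\lambda_{01}\ge\lambda_{10}$ and $\lambda_{11'}\ge\lambda_{01}+\lambda_{10}$) yields $\mathbb{P}(Z_j=i|\bg^{obs})>\lambda_{01}/(\lambda_{01}+\lambda_{10})\ge 1/2$; the finite rejection loss only shrinks the region where $\hat Z_j=i$, so the bound is inherited. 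In case~(1), optimality of $\hat Z_j=i$ gives $\varepsilon_j(i)\le\varepsilon_j(R)=\lambda_R$; bounding $\varepsilon_j(i)\ge\min(\lambda_{01},\lambda_{11'})\bigl(1-\mathbb{P}(Z_j=i|\bg^{obs})\bigr)=\lambda_{01}\bigl(1-\mathbb{P}(Z_j=i|\bg^{obs})\bigr)$ via $\lambda_{11'}\ge\lambda_{01}$, and combining with $\lambda_{01}\ge 2\lambda_R$, gives $1-\mathbb{P}(Z_j=i|\bg^{obs})\le 1/2$.

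The step I expect to need the most care is the boundary: the estimates above produce $\mathbb{P}(Z_j=i|\bg^{obs})\ge 1/2$ rather than a strict inequality when the governing loss inequalities hold with equality, and then two distinct records could in principle each have posterior probability exactly $1/2$ of matching the same $i$. I would dispose of this by noting that at such a boundary the ``safe'' label attains the same posterior expected loss as $i$: in case~(2) one has $\varepsilon_j(i)=\varepsilon_j(n_1+j)$ there, and in case~(1) one has $\varepsilon_j(i)=\lambda_R=\varepsilon_j(R)$ there, so breaking ties toward $n_1+j$ (resp.\ $R$) leaves the objective unchanged while restoring the strict bound $\mathbb{P}(Z_j=i|\bg^{obs})>1/2$ for every coordinate that still points to some $i\le n_1$. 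After these tie-breaks the coordinate-wise solution is a valid bipartite matching, hence optimal for the constrained problem, establishing the theorem.
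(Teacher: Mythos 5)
Your proposal is correct and follows essentially the same route as the paper: minimize each $\varepsilon_j(\hat Z_j)$ coordinate-wise, then verify feasibility by showing that any declared match forces $\mathbb{P}(Z_j=i|\bg^{obs})>1/2$, so the disjointness of the events $\{Z_j=i\}$ across $j$ rules out conflicting assignments. Your explicit tie-breaking toward $R$ and $n_1+j$ is a small refinement the paper leaves implicit (its proof uses strict inequalities $\varepsilon_j(i)<\varepsilon_j(R)$ and $\varepsilon_j(i)<\varepsilon_j(n_1+j)$ without discussing ties), but the substance of the argument is the same.
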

\begin{proof}
We need to show that $\hat\bZ$ is such that if $\hat Z_j\in\{1,\dots,n_1\}$ then $\hat Z_{j'}\neq \hat Z_j$ for $j'\neq j$, that is, we do not obtain conflicting matching decisions.  
\begin{enumerate}
\item[1)] Assume $\lambda_{11'}\geq \lambda_{01} \geq 2\lambda_R>0$. According to the construction of $\hat\bZ$ in the theorem, if $\hat Z_j=i\in \{1,\dots,n_1\}$ then $\varepsilon_{j}(i)<\varepsilon_{j}(R)$, which is equivalent to 
\begin{align*}
\mathbb{P}(Z_j=i|\bg^{obs}) & > 1-\frac{\lambda_{R}}{\lambda_{01}}+\frac{\lambda_{11'}-\lambda_{01}}{\lambda_{01}}\mathbb{P}(Z_j\notin\{ i, n_1+j\}|\bg^{obs}).
\end{align*}
Using this inequality along with the restrictions $\lambda_{11'}\geq \lambda_{01} \geq 2\lambda_R>0$ we obtain $\mathbb{P}(Z_j=i|\bg^{obs})>1/2$, which implies that $\mathbb{P}(Z_{j'}=i|\bg^{obs})<1/2$ because $\sum_{j=1}^{n_2}\mathbb{P}(Z_{j}=i|\bg^{obs})\leq 1$, and therefore $\hat Z_{j'}\neq \hat Z_{j}$ for all $j'\neq j$.  
\item[2)] Assume $\lambda_{01}\geq \lambda_{10}>0$ and $\lambda_{11'}\geq \lambda_{01}+\lambda_{10}$.  If $\hat Z_j=i\in \{1,\dots,n_1\}$ then $\varepsilon_{j}(i)<\varepsilon_{j}(n_1+j)$.  In the proof of Theorem \ref{prop:full_link_gen} we showed that in such case if $\lambda_{01}\geq \lambda_{10}$ and $\lambda_{11'}\geq \lambda_{01}+\lambda_{10}$ then $\mathbb{P}(Z_{j}=i|\bg^{obs})>1/2$, which in turn implies that $\mathbb{P}(Z_{j'}=i|\bg^{obs})< 1/2$ for all $j'\neq j$, and so $\hat Z_{j'}\neq i$ for all $j'\neq j$.
\end{enumerate}
\end{proof}

We now present a particular case of Theorem \ref{prop:partial_link_gen} that allows an explicit expression for the Bayes estimate.  

\begin{theo}\label{prop:partial_link_closed_form} If $\lambda_{11'}\geq \lambda_{01} \geq 2\lambda_R> 0$ and $\lambda_{10}\geq 2\lambda_{R}$ in the loss function given by Equations \eqref{eq:LossR} and \eqref{eq:LossR_j}, the Bayes estimate of the bipartite matching can be obtained from $\hat\bZ = (\hat Z_{1}, \dots, \hat Z_{n_2})$, where $\hat Z_j$ is given by
\begin{subnumcases}{\hat Z_j=}
    i, &  if \ $\mathbb{P}(Z_j=i|\bg^{obs})>1-\frac{\lambda_{R}}{\lambda_{01}}+\frac{\lambda_{11'}-\lambda_{01}}{\lambda_{01}}\mathbb{P}(Z_j\notin\{ i, n_1+j\}|\bg^{obs})$; \label{eq:colo2_ineq1}\\
		n_1+j, &  if \ $\mathbb{P}(Z_j=n_1+j|\bg^{obs})>1-\frac{\lambda_R}{\lambda_{10}}; \label{eq:colo2_ineq2}$\\
		R, &  otherwise.\nonumber
\end{subnumcases}
\end{theo}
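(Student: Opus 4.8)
The plan is to reduce the problem to the per-record minimization already justified by Theorem~\ref{prop:partial_link_gen} and then carry out that one-dimensional minimization in closed form. Note that the hypotheses $\lambda_{11'}\geq\lambda_{01}\geq 2\lambda_R>0$ are exactly case~1) of Theorem~\ref{prop:partial_link_gen}, so that theorem already guarantees that choosing $\hat Z_j=\arg\min_{\hat Z_j}\varepsilon_j(\hat Z_j)$ separately for each $j$ produces a legitimate bipartite matching; hence it suffices to identify this minimizer explicitly. Throughout I would abbreviate $p_i=\mathbb{P}(Z_j=i|\bg^{obs})$ for $i\leq n_1$ and $p_0=\mathbb{P}(Z_j=n_1+j|\bg^{obs})$, so that $\mathbb{P}(Z_j\notin\{i,n_1+j\}|\bg^{obs})=1-p_i-p_0$.

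First I would dispose of the competition among the $n_1$ link labels. Substituting into Equation~\eqref{eq:PostExpLoss_j} gives $\varepsilon_j(i)=\lambda_{01}p_0+\lambda_{11'}(1-p_i-p_0)$, which, since $\lambda_{11'}>0$, is strictly decreasing in $p_i$; thus the best link candidate is $i^*=\arg\max_{i\leq n_1}p_i$, and it is the \emph{unique} maximizer as soon as $p_{i^*}>1/2$. It then remains to compare the three numbers $\varepsilon_j(i^*)$, $\varepsilon_j(n_1+j)=\lambda_{10}(1-p_0)$, and $\varepsilon_j(R)=\lambda_R$. A routine rearrangement shows that, for any $i$, $\varepsilon_j(i)<\lambda_R$ is equivalent to inequality~\eqref{eq:colo2_ineq1} with that $i$, and that $\varepsilon_j(n_1+j)<\lambda_R$ is equivalent to inequality~\eqref{eq:colo2_ineq2}; I would write these two equivalences out, since they are the content of the first two cases of the statement.

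The step I expect to require actual care is showing that the priority ordering built into the case split is consistent --- namely that \eqref{eq:colo2_ineq1} and \eqref{eq:colo2_ineq2} are never simultaneously true, and that whenever \eqref{eq:colo2_ineq1} holds the label $i^*$ genuinely dominates $n_1+j$. This is exactly where the extra hypothesis $\lambda_{10}\geq 2\lambda_R$ (which was not needed for Theorem~\ref{prop:partial_link_gen}) enters. If \eqref{eq:colo2_ineq1} holds then, because $\lambda_{11'}\geq\lambda_{01}$ and $1-p_{i^*}-p_0\geq 0$, its right-hand side is at least $1-\lambda_R/\lambda_{01}\geq 1/2$ (using $\lambda_{01}\geq 2\lambda_R$), so $p_{i^*}>1/2$, whence $p_0<1/2$ and $\varepsilon_j(n_1+j)=\lambda_{10}(1-p_0)>\lambda_{10}/2\geq\lambda_R$; this simultaneously shows that \eqref{eq:colo2_ineq2} fails and that $\varepsilon_j(i^*)<\lambda_R<\varepsilon_j(n_1+j)$. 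Assembling the cases: if \eqref{eq:colo2_ineq1} holds, $i^*$ strictly beats $n_1+j$ and $R$ and, by monotonicity, every other link, so $\hat Z_j=i^*$; if \eqref{eq:colo2_ineq1} fails but \eqref{eq:colo2_ineq2} holds, then $\varepsilon_j(i)\geq\varepsilon_j(i^*)\geq\lambda_R>\varepsilon_j(n_1+j)$ for every link $i$, so $\hat Z_j=n_1+j$; and if both fail, $\varepsilon_j(R)=\lambda_R$ is already a minimizer, so $\hat Z_j=R$. This is precisely the rule in the statement, and it forms a valid bipartite matching by the reduction of the first paragraph, completing the proof.
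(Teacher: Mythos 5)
Your proof is correct and follows essentially the same route as the paper: it invokes Theorem \ref{prop:partial_link_gen} (case 1) to reduce to per-record minimization of $\varepsilon_j$, identifies inequalities \eqref{eq:colo2_ineq1} and \eqref{eq:colo2_ineq2} as equivalent to $\varepsilon_j(i)<\varepsilon_j(R)$ and $\varepsilon_j(n_1+j)<\varepsilon_j(R)$, and uses the same key step (the inequality forces $\mathbb{P}(Z_j=i|\bg^{obs})>1/2$, hence $\mathbb{P}(Z_j=n_1+j|\bg^{obs})<1/2$, so $\lambda_{10}\geq 2\lambda_R$ gives $\varepsilon_j(n_1+j)>\lambda_R$) to sort the three candidate decisions. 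The only difference is organizational — you assemble mutually exclusive cases where the paper proves two biconditionals — which does not change the substance.
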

\begin{proof}
From Theorem \ref{prop:partial_link_gen} we know that the constraints $\lambda_{11'}\geq \lambda_{01} \geq 2\lambda_R> 0$ allow the Bayes estimate to be obtained from the marginal optimal decisions for each $\hat Z_j$.  We now need to show: 1) the inequality in \eqref{eq:colo2_ineq1} holds true if and only if $\varepsilon_{j}(i)<\varepsilon_{j}(R),\ \varepsilon_{j}(n_1+j)$; 2) the inequality in \eqref{eq:colo2_ineq2} holds true if and only if $\varepsilon_{j}(n_1+j)<\varepsilon_{j}(R),\ \varepsilon_{j}(i)$ for all $i\in\{1,\dots,n_1\}$.  
\begin{itemize}
\item[1)] Firstly, as it had been noted in the proof of Theorem \ref{prop:partial_link_gen}, the inequality in \eqref{eq:colo2_ineq1} is equivalent to $\varepsilon_{j}(i)<\varepsilon_{j}(R)$.  ($\Rightarrow$) Given the previous note we only need to show $\varepsilon_{j}(i)<\varepsilon_{j}(n_1+j)$.  Given the constraints $\lambda_{11'}\geq \lambda_{01} \geq 2\lambda_R> 0$, the inequality in \eqref{eq:colo2_ineq1} implies $\mathbb{P}(Z_j=i|\bg^{obs})>1/2$, which in turn implies $\mathbb{P}(Z_j=n_1+j|\bg^{obs})<1/2\leq 1-\lambda_{R}/\lambda_{10}$ (because $\lambda_{10}\geq 2\lambda_{R}$), which is equivalent to $\varepsilon_{j}(R)<\varepsilon_{j}(n_1+j)$.  By transitivity we have $\varepsilon_{j}(i)<\varepsilon_{j}(n_1+j)$.  ($\Leftarrow$)  If $i={\arg\min}_{\hat Z_j}\ \varepsilon_{j}(\hat Z_j)$, then in particular $\varepsilon_{j}(i)<\varepsilon_{j}(R)$, which is equivalent to the inequality in \eqref{eq:colo2_ineq1}.
\item[2)] The inequality in \eqref{eq:colo2_ineq2} is equivalent to $\varepsilon_{j}(n_1+j)<\varepsilon_{j}(R)$.  ($\Rightarrow$) We only need to show that $\varepsilon_{j}(n_1+j)<\varepsilon_{j}(i)$ for all $i\in\{1,\dots,n_1\}$.  If the inequality in \eqref{eq:colo2_ineq2} holds true then $\mathbb{P}(Z_j=n_1+j|\bg^{obs})>1/2$, which implies $\mathbb{P}(Z_j=i|\bg^{obs})<1/2$, which in turn means that the inequality in \eqref{eq:colo2_ineq1} does not hold true for any $i$, and therefore $\varepsilon_{j}(R)\leq\varepsilon_{j}(i)$ for all $i\in\{1,\dots,n_1\}$ (because inequality \eqref{eq:colo2_ineq1} is equivalent to $\varepsilon_{j}(i)<\varepsilon_{j}(R)$). The result is obtained by transitivity.  ($\Leftarrow$) If $n_1+j={\arg\min}_{\hat Z_j}\ \varepsilon_{j}(\hat Z_j)$, then in particular $\varepsilon_{j}(n_1+j)<\varepsilon_{j}(R)$, which is equivalent to the inequality in \eqref{eq:colo2_ineq2}.
\end{itemize}
\end{proof}

Under the conditions of Theorem \ref{prop:partial_link_closed_form}, if the only two probable possibilities for record $j$ are to either match a certain record $i$ or to not match any record, that is, if $\mathbb{P}(Z_j\notin\{ i, n_1+j\}|\bg^{obs})=0$, or if the loss from a false match between records $j$ and $i$ is the same regardless of the actual matching status of $j$, that is, if $\lambda_{11'}=\lambda_{01}$, then we take the decision $\hat Z_j=i$ only when $\mathbb{P}(Z_j\neq i|\bg^{obs})<\lambda_{R}/\lambda_{01}$, and so for such cases $\lambda_{R}/\lambda_{01}$ works as a control over the probability of a false match.  It is therefore sensible to take $\lambda_{R}/\lambda_{01}$ to be small, and in particular the solution in Theorem \ref{prop:partial_link_closed_form} covers the cases when $\lambda_{R}/\lambda_{01}\leq 1/2$.  For cases when $\mathbb{P}(Z_j\notin\{ i, n_1+j\}|\bg^{obs})>0$, if $\lambda_{11'}>\lambda_{01}$, the decision rule in Theorem \ref{prop:partial_link_closed_form} is more conservative requiring $\mathbb{P}(Z_j\neq i|\bg^{obs})$ to be even lower than $\lambda_{R}/\lambda_{01}$ to declare a match.  Finally, under the conditions of Theorem \ref{prop:partial_link_closed_form} we take the decision $\hat Z_j=n_1+j$ only when $\mathbb{P}(Z_j\neq n_1+j|\bg^{obs})<\lambda_{R}/\lambda_{10}$.  Given that $\lambda_{R}/\lambda_{10}$ works effectively as a control over the probability of a false non-match, only small values of $\lambda_{R}/\lambda_{10}$ are sensible, and the solution in Theorem \ref{prop:partial_link_closed_form} covers such cases since it requires $\lambda_{R}/\lambda_{10}\leq 1/2$.

\section{Performance Comparison}\label{s:simulations}

We now present a simulation study to compare the performance of the Fellegi-Sunter mixture model approach with the approach presented in Section \ref{s:BBRL}, which for simplicity we refer to as \emph{beta record linkage}, given that we use the beta prior for bipartite matchings (Section \ref{ss:BetaMatchingPrior}).  We consider  different scenarios of files' overlap and measurement error.  We generated pairs of datafiles using a synthetic data generator developed by \cite{Christen05}, \cite{ChristenPudjijono09}, and \cite{ChristenVatsalan13}.  This tool allows us to create synthetic datasets containing various types of fields which can be corrupted with different types of errors.  Since it would be expected for a record linkage methodology to perform well when the records have a lot of identifying information, we are interested in a more challenging scenario where decisions have to be made based on only a small number of fields.  

In this simulation each datafile has 500 records and four fields: given and family names, age, and occupation.  For each pair of datafiles there are $n_{12}$ individuals included in both, and so we refer to them as their \textit{overlap}.  We generated 100 pairs of datafiles for each combination of 100\%, 50\%, and 10\% files' overlap, and 1, 2, and 3 erroneous fields per record.  To generate each pair of datafiles the fields given and family names are sampled from frequency tables compiled by Christen et al. from public sources in Australia, and therefore popular names appear with higher probability in the synthetic datasets.  Age and occupation are each represented by eight categories and are sampled from their joint distribution in Australia.  
The data generator first creates a number of clean records which are later distorted to create the datafiles.  Each distorted record has a fixed number of erroneous fields which are allocated uniformly at random, and each field contains a maximum of three errors.  The types of errors are selected uniformly at random from a set of possibilities which vary from field to field, as summarized in Table \ref{t:errors_simulation}.  Notice that we generate missing values only for the fields age and occupation.  

For each pair of files we computed comparison data as summarized in Table \ref{t:sim_compdata}.  To compare names we use the Levenshtein edit distance, which is the minimum number of deletions, insertions, or replacements that we need to transform one string into the other.  We standardize this distance by dividing it by the length of the longest string.  The final measure belongs to the unit interval with 0 and 1 representing total agreement and total disagreement, respectively.

\begin{table}[t]
\centering
\footnotesize{
  \begin{minipage}[b]{.95\textwidth}
 \caption{Types of errors per field in the simulation study.  Edits: insertions, deletions, or substitutions of characters in a string.  OCR: optical character recognition errors. Keyboard: typing errors that rely on a certain keyboard layout.  Phonetic: uses a list of predefined phonetic rules.  For further details on the generation of these types of errors see \cite{ChristenPudjijono09} and \cite{ChristenVatsalan13}.}\label{t:errors_simulation}
\centering
\begin{tabular}{lccccc}
  \hline\\[-8pt]
       &  \multicolumn{5}{c}{Type of Error}\\
          \cline{2-6}\\ [-6pt]
  Fields & Missing & Edits & OCR & Keyboard & Phonetic \\
  \hline\\[-8pt]	
	Given and Family Names & & $\checkmark$ & $\checkmark$ & $\checkmark$ & $\checkmark$ \\  
	Age and Occupation & $\checkmark$ & && $\checkmark$ & \\
  \hline
\end{tabular}
\end{minipage}
}
\end{table}

\begin{table}[t]
\centering
\footnotesize{
  \begin{minipage}[b]{.95\textwidth}
 \caption{Construction of disagreement levels in the simulation study.  The Levenshtein distance is standardized to be in the unit interval.}\label{t:sim_compdata}
\centering
\begin{tabular}{lccccc}
  \hline\\[-8pt]
       &                     & \multicolumn{4}{c}{Levels of Disagreement}\\
          \cline{3-6}\\ [-6pt]
  Fields & Similarity & $0$ & $1$ & $2$ & $3$ \\
  \hline\\[-8pt]
  Given and Family Names & Levenshtein &  0 & $(0,.25]$ & $(.25,.5]$ & $(.5,1]$ \\
  Age and Occupation & Binary & Agree & Disagree &&\\
  \hline
\end{tabular}
\end{minipage}
}
\end{table}

We implemented the Fellegi-Sunter mixture model approach using the same likelihood that we used for beta record linkage (Equation \eqref{eq:obs_lhood}) and the EM algorithm.  In our Bayesian approach we used flat priors on the $\bm_f$ and $\bu_f$ parameters for all $f$, and also on the proportion of matches $\pi$ (see Section \ref{ss:BetaMatchingPrior}), that is $\alpha_\pi=\beta_{\pi}=\alpha_{f0}=\dots=\alpha_{fL_f}=\beta_{f0}=\dots=\beta_{fL_f}=1$ for all comparisons $f=1,2,3,4$.  For each pair of datasets we ran 1,000 iterations of the Gibbs sampler presented in Section \ref{ss:Gibbs}, and discarded the first 100 as burn-in.  The average runtime using an implementation in R \citep{R13} with parts written in C language was of 22, 32, and 37 seconds for files with overlap 100\%, 50\%, and 10\%, respectively, including the computation of the comparison data, on a laptop with a 2.80 GHz processor.  The corresponding average runtimes for the Fellegi-Sunter approach using an R implementation were 11, 18, and 54 seconds per file.  Although the software implementations of both methodologies are not comparable, they indicate that the Fellegi-Sunter mixture model approach can be much faster.  We also implemented a Bayesian alternative to the beta approach using a flat prior on the bipartite matchings, but its performance is virtually the same as the Fellegi-Sunter approach, and so we do not report these results.  

\subsection{Results with Full Estimates}

For each pair of files we obtain a full point estimate of the bipartite matching using each approach.  For the Fellegi-Sunter approach we use the MLE of the bipartite matching (Section \ref{ss:Jaro}) and for beta record linkage we use the $\hat\bZ^{e01}$ estimate obtained from Equation \eqref{eq:Ze01}.  For each estimate we computed the measures of \emph{precision} and \emph{recall}.  
If $\bZ$ is the true bipartite matching labeling, then the recall of an estimate $\hat\bZ$ is the proportion of matches that are correctly linked by $\hat\bZ$, that is $\sum_{j=1}^{n_2}I(\hat Z_j=Z_j\leq n_1)/\sum_{j=1}^{n_2}I(Z_j\leq n_1)$, whereas the precision of $\hat\bZ$ is the proportion of records linked by $\hat\bZ$ that are actual matches, that is $\sum_{j=1}^{n_2}I(\hat Z_j=Z_j\leq n_1)/\sum_{j=1}^{n_2}I(\hat Z_j\leq n_1)$.  A perfect record linkage procedure would lead to precision = recall = 1.
To summarize the performance of the methods under each scenario of overlap and measurement error we computed the median, first, and 99th percentiles of these measures across the 100 pairs of datafiles.

In Figure \ref{f:BRLAustraliaSim} we present the results of the simulation study, where rows show the performance of the two approaches and columns show the results for different amounts of overlap between the files.  In each panel solid lines refer to precision and dashed lines to recall, black lines show medians and gray lines show first and 99th percentiles.  We can see from the first row of Figure \ref{f:BRLAustraliaSim} that the Fellegi-Sunter mixture model approach has an excellent performance when the files have a large overlap, but its precision decays when the overlap of the files decrease, meaning that this methodology generates a large proportion of false-matches.  These findings agree with the observations made by \cite{Winkler02} in the sense that the mixture model approach leads to poor results when the overlap of the files is small and when the files do not contain a lot of identifying information.  Under these scenarios the mixture model is not able to accurately identify the clusters of record pairs associated with matches and non-matches, and instead separates a cluster of extreme disagreement profiles from the rest, leading to a large number of false-matches.

\begin{figure*}[t]
\centering
		\centerline{\includegraphics[width=.85\linewidth]{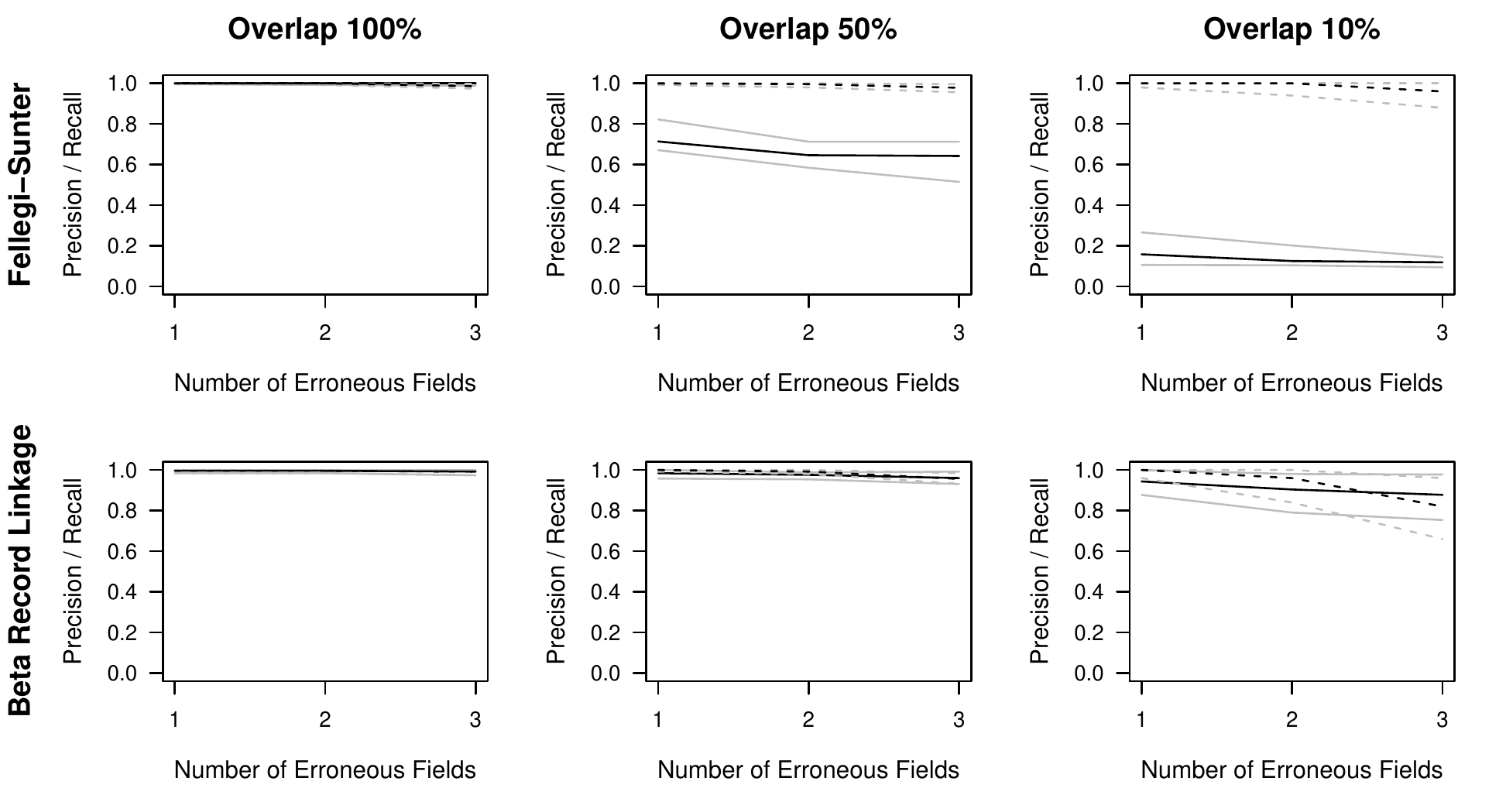}}
  \begin{minipage}[b]{.95\textwidth}
  \caption{Comparison of the performance of two methodologies for record linkage.  Solid lines refer to precision, dashed lines to recall, black lines show medians, and gray lines show first and 99th percentiles.}
\label{f:BRLAustraliaSim}\end{minipage} 
\end{figure*}

On the other hand, from the second row of Figure \ref{f:BRLAustraliaSim} we can see that the performance of the beta approach is remarkable  across all scenarios, and even though it deteriorates when the number of errors increases and when the overlap of the files decreases, it is much more robust than the Fellegi-Sunter approach.  In scenarios where the amount of error is large and the overlap is small, the  uncertainty in the linkage may be quite large, which is evidenced by the variability of the results in the panel of the second row and third column of Figure \ref{f:BRLAustraliaSim}.  For such cases it can be beneficial to leave uncertain parts of the bipartite matching undecided.  We now show the performance of the beta approach when allowing a rejection option as introduced in Section \ref{ss:Zhat_rejections}.

\subsection{Results with Partial Estimates}

We now present the performance of both methodologies when allowing for a rejection option.  In the case of the Fellegi-Sunter approach this is done by using the Fellegi-Sunter decision rule presented in Section \ref{ss:FS_rule} after obtaining the MLE of the bipartite matching.  We use the nominal error levels $\mu=\mathbb{P}(\text{assign } (i,j) \text{ as link}|\Delta_{ij}=0)= 0.0025$ and $\lambda=\mathbb{P}(\text{assign } (i,j) \text{ as non-link}|\Delta_{ij}=1)=0.005$.  We fix the nominal error levels as $\mu=\lambda/2$ to (nominally) protect more against false-matches.  

For beta record linkage we use the Bayes estimate presented in Theorem \ref{prop:partial_link_closed_form} with $\lambda_{10}=\lambda_{01}=1$ and $\lambda_{11'}=2$, so that it is equivalent to adding a rejection option to the $\hat\bZ^{e01}$ estimator used in the previous section.  We fix the loss of a rejection at $\lambda_{R}=0.1$ so that a rejection is only 10\% as costly as a false non-match.  We notice that this partial estimator is not comparable with the Fellegi-Sunter decision rule in terms of aiming to control the nominal errors $\mu$ and $\lambda$.  

When we allow rejections it is not meaningful to use the measure of recall anymore given that we are not aiming at detecting all the matches.  Instead, we are now aiming at being accurate with the decisions that we take.  In this section we therefore use two measures of accuracy for our linkage and non-linkage decisions.  The \emph{positive predictive value} (PPV) is the proportion of links that are actual matches, and so it is equivalent to the precision measure used in the last section.  The \emph{negative predictive value} (NPV) is defined as the proportion of non-links that are actual non-matches, that is $\sum_{j=1}^{n_2}I(\hat Z_j = Z_j = n_1+j)/\sum_{j=1}^{n_2}I(\hat Z_j = n_1+j)$.  In addition we report the \emph{rejection rate} (RR), defined as $\sum_{j=1}^{n_2}I(\hat Z_j = R )/n_2$, which should ideally be small.  A perfect record linkage procedure would have PPV $=$ NPV $= 1$ and RR $=0$.  

\begin{figure*}[t]
\centering
		\centerline{\includegraphics[width=0.9\linewidth]{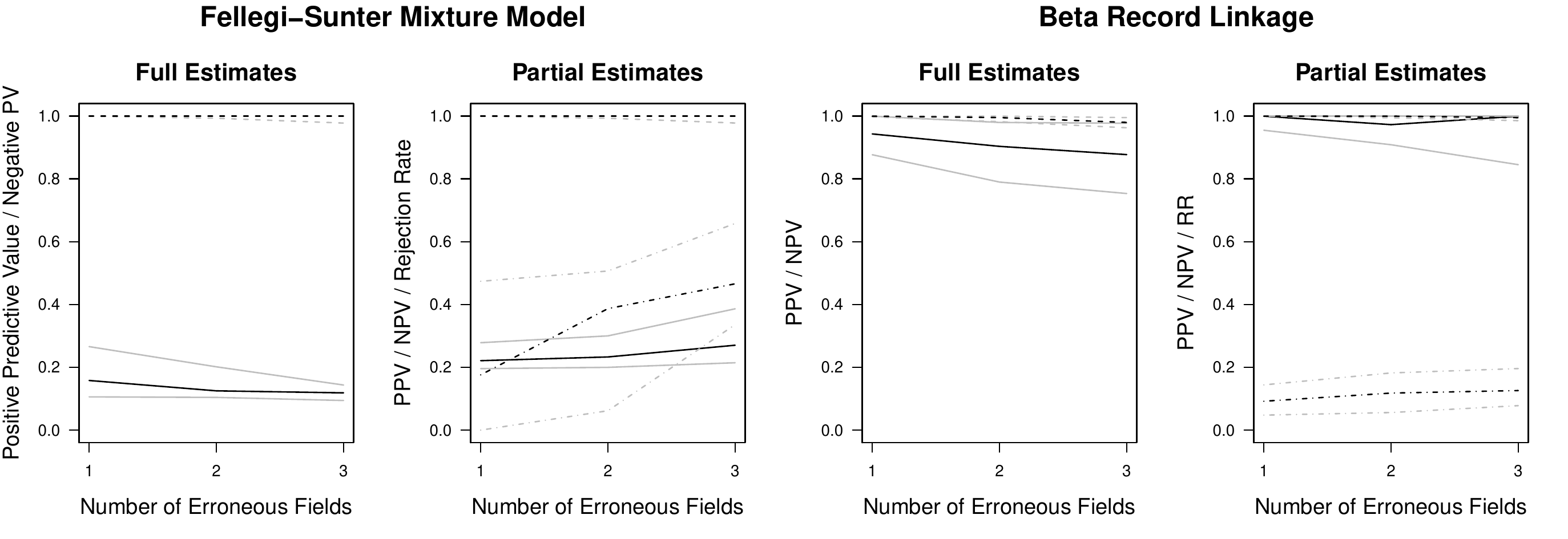}}
  \begin{minipage}[b]{0.95\textwidth}
  \caption{Performance comparison with full and partial estimates of the bipartite matching.  We use datafiles with 10\% overlap.  In the Fellegi-Sunter mixture model approach we obtain full estimates using the bipartite matching MLE and partial estimates using the Fellegi-Sunter decision rule.  In beta record linkage we use $\bZ^{e01}$ for full estimates and $\bZ^{e01R}$ for partial estimates.  Solid lines refer to precision or positive predictive value (PPV), dashed lines to negative predictive value (NPV), dot-dashed lines to rejection rate (RR), black lines show medians, and gray lines show first and 99th percentiles.  }
\label{f:BRLAustraliaSim_pointest}\end{minipage} 
\end{figure*}

As we saw in the last section, when the files have a small overlap the matching uncertainty is large and both procedures have their worst performances.  We therefore focus on the scenario where the files have 10\% overlap.  In Figure \ref{f:BRLAustraliaSim_pointest} we present the performance of both methodologies with and without rejections.  The PPV (precision) of the Fellegi-Sunter mixture model approach is very low, and it does not improve much by allowing rejections, even after using the small nominal false-match rate $\mu=0.0025$.  On the other hand, the beta approach with rejections leads to a PPV much closer to 1 across all measurement error levels, and it has a lower rejection rate.  We notice that although using the rejection option helps to prevent the PPV from being too low, as there is more error in the files the uncertainty in the matching decisions increases, and so does the rejection rate.

The simulation results presented in this section make us confident that beta record linkage represents a reliable approach for merging datafiles, especially when these contain a limited amount of information and a small overlap, such as the case that we now study.

\section{Combining Lists of Civilian Casualties from the Salvadoran Civil War}\label{s:SV}

The Central American republic of El Salvador underwent a civil war from 1980 until 1991. Over the course of the war a number of organizations collected reports on human rights violations, in particular on civilian killings.  It is unreasonable to assume that any organization covered the whole universe of violations, but their information can be combined to obtain
a more complete account of the lethal violence.  When combining such sources of information it is essential to identify which individuals appear recorded across the multiple databases.  In particular, this is a crucial step required to produce estimates on the total number of casualties using capture-recapture or multiple systems estimation \citep[see, e.g.,][]{LumPriceBanks13}, or to at least provide a lower bound on that number \citep{Jewelletal13}.  In this section we apply the beta record linkage methodology to combine the lists of civilian casualties obtained from two different sources: El Rescate - Tutela Legal (ER-TL) and the Salvadoran Human Rights Commission (Comisi\'on de Derechos Humanos de El Salvador --- CDHES).

The Los Angeles-based NGO \emph{El Rescate} developed a database of reports on human rights abuses linked to  the command structure of the perpetrators \citep{Howland08}.  The information on human rights abuses was digitized from reports that had been published periodically during the civil war by the project \emph{Tutela Legal}  of the Archdiocese of San Salvador.  Tutela Legal's information was obtained from individuals who came to their office in San Salvador to make denunciations.  Personnel from Tutela Legal interviewed the complainants, checked the credibility of the testimonies, and also compared the denunciations with their existing records to avoid duplication.   According to \cite{Howland08}, Tutela Legal required investigating all denunciations before publishing them as human rights violations, which gives us confidence on the quality of the information of this datafile.  Nevertheless, these investigations were not carried out when there were military operations or other restrictions in the area, therefore leading to many denunciations not being published, which in turn implies an important undercount of violations in this datafile. 

The second datafile comes from the CDHES.  According to \cite{Ball00ElSalvador}, between the years 1979 and 1991, the CDHES took more than 9,000 testimonials on human rights violations that were recorded and stored in written form.  \cite{Ball00ElSalvador} describes the 1992 project of digitization of all these reports, as well as the construction of a database containing summary information of the violations.   We notice that this database was constructed from testimonials that were provided to the CDHES shortly after the violations occurred, and therefore we would expect the details of the events to be quite reliable, that is, the dates and locations of the events and the names of the victims should be quite accurate, although this file is not free of typographical errors.

The two datafiles have the following six fields in common: given and family names of the victim; year, month, day, and region of death.  These two datafiles contain some records corresponding to members of families that were killed the same day in the same location, and therefore their records share the same information except perhaps for the field of given name.  Given this idiosyncrasy of the datafiles and their limited amount of information we expect the linkage to be quite uncertain for some records, and therefore the methodology proposed in this article is well suited to address this scenario.  

\subsection{Implementation of Beta Record Linkage}

 In this article, a valid
casualty report is defined as a record  that specifies given and
family name of the victim, which leads to $n_1=$ 4,420 records for ER-TL and $n_2=$ 1,324 for CDHES.  The names were standardized to account for possible misspellings that can occur with Hispanic names, as presented in \cite{Sadinle14}.  The number of record pairs is 5,852,080 and for each of them we build a comparison vector using the disagreement levels presented in Table \ref{t:SV_compdata}.  We use a modification of the Levenshtein distance introduced by \cite{Sadinle14} to account for the fact that Hispanic names are likely to be have missing pieces.  

\begin{table}[t]
\centering
\footnotesize{
  \begin{minipage}[b]{.95\textwidth}
 \caption{Construction of disagreement levels for the linkage of the ER-TL and CDHES datafiles.  The modified Levenshtein distance is standardized to be in the unit interval.}\label{t:SV_compdata}
\vspace{3pt}
\centering
\begin{tabular}{lccccc}
  \hline\\[-8pt]
       &                     & \multicolumn{4}{c}{Levels of Disagreement}\\
          \cline{3-6}\\ [-6pt]
  Fields & Similarity & $0$ & $1$ & $2$ & $3$ \\
  \hline\\[-8pt]
  Given Name, Family Name & Modified Levenshtein &  0 & $(0,.25]$ & $(.25,.5]$ & $(.5,1]$ \\
  Year of Death & Absolute Difference &  0 & 1 & 2 & 3+ \\
  Month of Death & Absolute Difference & 0 & 1 & 2--3 & 4+ \\
  Day of Death & Absolute Difference &  0 & 1--2 & 3--7 & 8+ \\
	Region of Death & Adjacency & Same & Adjacent & Other & \\
  \hline
\end{tabular}
\end{minipage}
}
\end{table}

We used the same implementation of beta record linkage used in the simulation studies of Section \ref{s:simulations}, that is, we used flat priors on the $\bm_f$ and $\bu_f$ parameters for all $f$, and also on the proportion of matches $\pi$.  We ran 2,000 iterations of the Gibbs sampler presented in Section \ref{ss:Gibbs}, and discarded the first 200 as burn-in.  The runtime was of 46 minutes using the same implementation as in our simulation studies.

To check convergence we computed numeric functions of the bipartite matchings in the chain.  Given a bipartite matching labeling $\bZ^{[t]}$ at iteration $t$ we found the files' overlap size $n_{12}(\bZ^{[t]})$ and the matching statuses $I(Z^{[t]}_j=i)$ for all pairs of records.  For each of these chains we computed Geweke's convergence diagnostic as implemented in the R package \verb"coda" \citep{CODA}.  In Figure \ref{f:SV_diag} we show the values of Geweke's Z-scores for the matching statuses that are  not constant in the chain.  These scores range around the usual values of a standard normal random variable, indicating that it is reasonable to treat these chains as drawn from their stationary distributions.  We also present the traceplot of the files' overlap size, from which we can see that this chain seems to have converged rather quickly.

\begin{figure*}[t]
\centering
		\centerline{\includegraphics[width=0.9\linewidth]{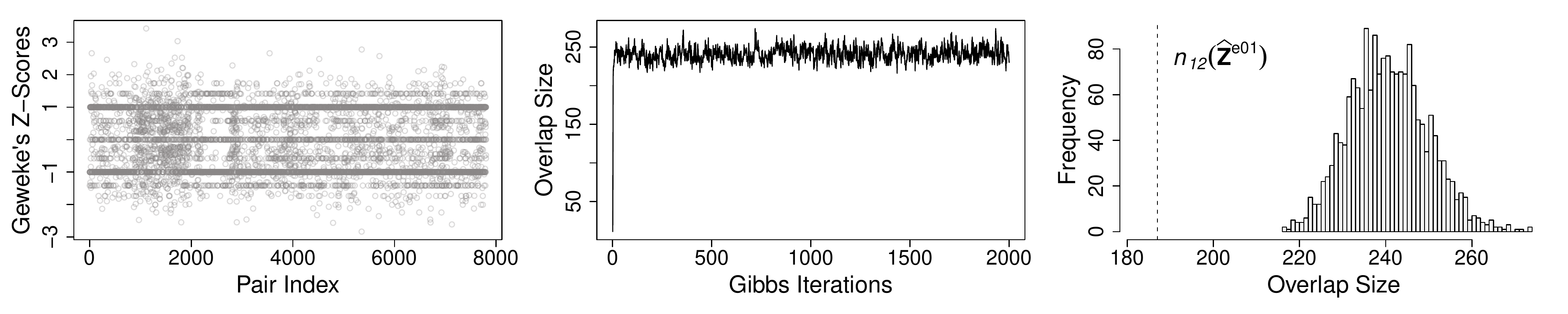}}
  \begin{minipage}[b]{0.95\textwidth}
  \caption{Left panel: Geweke's Z-scores for chains of record pairs' matching statuses. Middle panel: traceplot of the files' overlap size $n_{12}$. Right panel: estimated posterior distribution of $n_{12}$.}
\label{f:SV_diag}\end{minipage} 
\end{figure*}

\subsection{Linkage Results}

In the right-hand side panel of Figure \ref{f:SV_diag} we present the estimated posterior distribution of the files' overlap size $n_{12}$. This distribution ranges between 216 and 274, has a mean
and median of 241, and a posterior 90\% probability interval of [227, 256], with a corresponding interval of [.17, .19] for the fraction of matches $n_{12}/n_2$.  We can also obtain the posterior distribution on the number of unique killings reported to these institutions $n_1+n_2-n_{12}$, which has a 90\% probability interval of [5488, 5518].

We computed the full estimate $\hat \bZ^{e01}$, which leads to $n_{12}(\hat \bZ^{e01})=187$ links.  Comparing the posterior distribution of $n_{12}$ with $n_{12}(\hat \bZ^{e01})$ makes it evident that the $\hat \bZ^{e01}$ estimator is very conservative when declaring links.  Similarly as in our simulation studies, we also computed the partial estimate $\hat \bZ^{e01R}$ presented in Theorem \ref{prop:partial_link_closed_form} with $\lambda_{10}=\lambda_{01}=1,\lambda_{11'}=2$, and $\lambda_{R}=.1$.  Under this partial estimate the preliminary number of links is 136, and the number of rejections is 169, indicating that after clerical review the final number of links could be anywhere between 136 and 305, which contains the range of variation of the posterior of $n_{12}$.  An exploration of the rejections shows that many of them have more than one record in file 1 that could be a match, similar to the case presented in Figure \ref{f:UncertainMatch}.  For some other cases there is a single best possible match but they have a moderate number of disagreements that do not allow the pair to be linked right away.  In Table \ref{t:SV_compare_estimates} we compare the estimate $\hat \bZ^{e01R}$ against $\hat \bZ^{e01}$ and also against the bipartite matching MLE with and without the Fellegi-Sunter decision rule.

We saw that the Fellegi-Sunter mixture model approach has poor performance when linking files with a small number of fields and with a potentially small overlap, as it is the case for the files studied in this section.   For these files the bipartite matching MLE leads to a large number links (see Table \ref{t:SV_compare_estimates}) which indicates that it is probably overmatching, as we saw from our simulation studies.  When we use the Fellegi-Sunter decision rule after the MLE we obtain a partial estimator that still has a large number of links. 

\begin{table}[t]
\centering
\footnotesize{
  \begin{minipage}[b]{.95\textwidth}
 \caption{Comparison of bipartite matching estimates for the Salvadoran casualties data.  We show the cross-classifications of each type of decision (link: $\hat Z_j\leq n_1$, rejection: $\hat Z_j=R$, no-link: $\hat Z_j= n_1+j$) for the $\hat \bZ^{e01R}$ estimate versus $\hat \bZ^{e01}$, MLE, and MLE plus the Fellegi-Sunter decision rule.  The cells counting the number of records declared to have links by both estimates contain two values: $\sum_{j=1}^{n_2} I(\hat Z_j, \hat Z_j'\leq n_1) \big[\sum_{j=1}^{n_2} I(\hat Z_j=\hat Z_j' \leq n_1)\big]$, for estimates $\hat \bZ$ and $\hat \bZ'$.
}\label{t:SV_compare_estimates}
\centering
\begin{tabular}{lrrrrrrrrrrr}
  \hline\\[-8pt]
       &  && \multicolumn{2}{c}{$\hat\bZ^{e01}$} && \multicolumn{2}{c}{MLE} &&\multicolumn{3}{c}{MLE + Fellegi-Sunter}\\
          \cline{4-5} \cline{7-8} \cline{10-12}\\ [-6pt]
  $\hat \bZ^{e01R}$ & Total && \multicolumn{1}{c}{Link} & NL && \multicolumn{1}{c}{Link} & NL && \multicolumn{1}{c}{Link} & R & NL \\
  \hline\\[-8pt]
  Link            & 136 && 136 [136] &    0 && 136 [116] &   0 && 133 [116] &   3&   0  \\
	Rejection (R)  & 169  &&  51 &  118 && 167 &   2 && 157&  10&   2 \\
	Non-Link (NL)  & 1,019&&   0 & 1,019&& 861 & 158 && 466& 375& 178 \\
	\\ [-6pt]
	\multicolumn{1}{r}{Total} & 1,324 && 187 & 1,137 && 1,164 & 160 && 756 & 388 & 180\\
  \hline
\end{tabular}
\end{minipage}
}
\end{table}

The differences in the results between the Fellegi-Sunter mixture model approach and beta record linkage can be partially explained from examining the estimates of the $\bm_f$ and $\bu_f$ parameters under both approaches.  While the estimates of the $\bu_f$ are nearly the same for all fields for both methodologies, huge discrepancies appear in the $\bm_f$ parameters.  Under the Fellegi-Sunter mixture model approach $\hat\bm_{GivenName}=(.015, .001, .008, .976)$, and it is nearly the same as $\hat\bu_{GivenName}$, indicating that the given name field was essentially not taken into account to separate the class associated with matches, and therefore the clusters obtained by the mixture model are not appropriate for record linkage.  On the other hand the posterior mean of $\bm_{GivenName}$ under beta record linkage is $(.693, .050, .014, .243)$, which actually reflects something we would expect: matches tend to agree in the field given name (with 69.3\% probability).  

Although there is no ground truth for these datafiles to fully assess the performance of our methodology, the exploration of the different bipartite matching estimates and the results of our simulation studies make us confident that the beta record linkage approach along with one of the partial estimates presented in Section \ref{ss:BRLpointest} provide a good way of tackling the merging of these datafiles.  

\section{Discussion and Future Work}

The mixture model implementation of the Fellegi-Sunter approach to record linkage works well when there is not a lot of error in the files and their overlap is large.  This approach is also appealing because it is fast, but it is outperformed by the beta approach in a wide range of scenarios.  Beta record linkage provides a posterior distribution on the bipartite matchings which allows us to use different point estimators, including those that permit a rejection option with the goal withholding final linkage decisions for uncertain parts of the bipartite matching.  Although the Fellegi-Sunter decision rule was designed for this same purpose, its optimality relies on the assumption that the linkage decision for a record pair is determined only by its comparison vector once the distributions of the comparison data for matches and non-matches are fixed.  This assumption clearly does not hold true in the bipartite record linkage scenario since linkage decisions are interdependent.  We also notice that if we wanted to use the decision rules presented in Section \ref{ss:BRLpointest} with the estimated matching probabilities $\hat{\mathbb{P}}(\Delta_{ij}=1|\bg_{ij}^{obs})$ from a mixture model we would obtain conflicting decisions given that the mixture model assumes that the matching statuses of the record pairs are independent of one another.  Our Bayes estimates can however be used with any Bayesian approach to record linkage that provides a posterior distribution on bipartite matchings. 

Despite the improvements presented in this article there are a number of further extensions that can be pursued.  We focused on unsupervised record linkage, but adaptations to supervised and semi-supervised settings are also desirable.  An important extension of this methodology is to the multiple record linkage context where multiple datafiles need to be merged.  Although this problem has been addressed by extending the Fellegi-Sunter approach in \cite{SadinleFienberg13}, such generalization intrinsically inherits the difficulties emphasized in this article and it is too computationally expensive.  Finally, Bayesian approaches to record linkage hold promise in allowing the incorporation of matching uncertainty into subsequent analyses of the linked data, but formal theoretical justifications for these procedures have to be developed.   

\bibliographystyle{apalike}
\bibliography{biblio}

\end{document}